\theoremstyle{definition}
\newtheorem{theorem}{Theorem}
\newtheorem{proposition}{Proposition}
\newtheorem{corollary}{Corollary}
\newtheorem{definition}{Definition}
\newtheorem{example}{Example}
\newtheorem*{remark}{Remark}
\def\BibTeX{{\rm B\kern-.05em{\sc i\kern-.025em b}\kern-.08em
    T\kern-.1667em\lower.7ex\hbox{E}\kern-.125emX}}
\begin{document}

\title{Steering control of payoff-maximizing players in adaptive learning dynamics
\thanks{X.C. gratefully acknowledges the generous faculty startup fund provided by BUPT (No. 505022023).}
}

\author{\IEEEauthorblockN{1\textsuperscript{st} Xingru Chen}
\IEEEauthorblockA{\textit{School of Sciences} \\
\textit{Beijing University of Posts and Telecommunications}\\
Beijing 100876, China \\
xingrucz@gmail.com}
\and
\IEEEauthorblockN{2\textsuperscript{nd} Feng Fu}
\IEEEauthorblockA{\textit{Department of Mathematics} \\
\textit{Department of Biomedical Data Science} \\
\textit{Dartmouth College}\\
Hanover, NH 03755, USA \\
fufeng@gmail.com}
}
\maketitle

\begin{abstract}
Evolutionary game theory provides a mathematical foundation for cross-disciplinary fertilization, especially for integrating ideas from artificial intelligence and game theory. Such integration offers a transparent and rigorous approach to complex decision-making problems in a variety of important contexts, ranging from evolutionary computation to machine behavior.  Despite the astronomically huge individual behavioral strategy space for interactions in the iterated Prisoner's Dilemma (IPD) games, the so-called Zero-Determinant (ZD) strategies is a set of rather simple memory-one strategies yet can unilaterally set a linear payoff relationship between themselves and their opponent. Although the witting of ZD strategies gives players an upper hand in the IPD games, we find and characterize unbending strategies that can force ZD players to be fair in their own interest. Moreover, our analysis reveals the ubiquity of unbending properties in common IPD strategies which are previously overlooked. In this work, we demonstrate the important steering role of unbending strategies in fostering fairness and cooperation in pairwise interactions. Our results will help bring a new perspective by means of combining game theory and multi-agent learning systems for optimizing winning strategies that are robust to noises, errors, and deceptions in non-zero-sum games.
\end{abstract}

\begin{IEEEkeywords}
steering control, adaptive learning, Prisoner's Dilemma, direct reciprocity, evolutionary game theory
\end{IEEEkeywords}

\section{Introduction}

Evolutionary game theory provides a mathematical foundation for studying mechanisms of cooperation and myriad learning theories toward altruistic behavior~\cite{axelrod1981evolution, nowak2006five}. Adaptive learning strategies will evolve in scenarios where self-interested individuals interact with one another. Among others, the Prisoner's Dilemma (PD) game is a symmetric game involving two players X and Y, and two actions: to cooperate or to defect. In a one-shot PD game, the four possible outcomes correspond to different payoffs from the focal player's perspective: if both are cooperators, one gets the reward $R$, if a cooperator is against a defector, the sucker's payoff $S$, if a defector is against a cooperator, the temptation $T$, and if both are defectors, the punishment $P$. The game is considered a paradigm for understanding the conflict between self-interest and collective interest as the payoff structure satisfies $T > R > P > S$. For a particular example, the donation game is a simplified form of the PD game, where a cooperator offers the other player a benefit $b$ at a cost $c$ to itself with $b > c$ but a defector offers nothing. The iterated Prisoner's Dilemma (IPD) games further assume repeated encounters between the same two players and sheds insights into the idea of direct reciprocity~\cite{trivers1971evolution}. 

Despite the astronomically huge individual behavioral strategy space for IPD game interactions~\cite{hilbe2018partners, harper2017reinforcement}, the so-called zero-determinant (ZD) strategies, including extortioners (extortionate ZD), compliers (generous ZD), and equalizers, are a set of rather simple memory-one strategies yet can unilaterally set a linear relation between their own payoff $s_X$ and their co-player's payoff $s_Y$~\cite{boerlijst1997equal, press2012iterated, stewart2013extortion, akin2015you, chen2022intricate}. The three free parameters that determine a ZD strategy $\bm{p} = [p_1, p_2, p_3, p_4]$ and how unequal the corresponding payoff relation $s_X - O = \chi(s_Y - O)$ is are the baseline payoff $O$, the extortion factor $\chi$, and the normalization factor $\phi$. An equalizer with $\chi = +\infty$ will fix their co-player's payoff to the given $O$, which can take any value between the punishment $P$ and the reward $R$. An extortioner with $O = P$ and $\chi > 1$ can always get an unfair share of the payoffs in conventional IPD games with $T + S > 2P$~\cite{press2012iterated}. In contrast, a complier with $O = R$ and $\chi > 1$ can guarantee that their own payoff is never above their co-player's~\cite{stewart2013extortion}.

As such, the finding of ZD strategies has greatly spurred new waves of work from diverse fields~\cite{hao2015extortion, mcavoy2016autocratic}, aiming to (i) elucidate the robustness and resilience of cooperation by means of the natural selection of IPD strategies from a population dynamics perspective~\cite{hilbe2013evolution, chen2014robustness, chen2022evolutionary} and (ii) explain reactions of human subjects to extortion that could be out of a desire for profit and/or a concern for fairness~\cite{hilbe2014extortion, becks2019extortion}. Despite the witting to gain the upper hand in IPD games, there still exists one open issue regarding ZD strategies: how will the payoff structure and the three parameters determine their pairwise dominance and extortion ability?

We have fully addressed the above question and revealed the unforeseen Achilles' heel of ZD strategies. Most of all, we have found and characterized multiple general classes of strategies that are unbending to extortion and can trigger the backfire of being extortionate and even outperform an extortioner in certain conditions. A fixed unbending strategy will compel a self-interested and adaptive extortioner who tries to maximize their payoff to be fair and behave like Tit-for-Tat (TFT)~\cite{axelrod1981evolution, nowak1992tit} ultimately by letting $\chi \to 1$. Examples of unbending strategies include general ZD strategies with the baseline payoff $O$ satisfying $P < O \leq R$ and the simple adaptive player Win-Stay Lose-Shift (WSLS)~\cite{nowak1993strategy}, of which the latter can even outperform an extortioner in interactions of more adversarial nature where $T + S < 2P$~\cite{d2015statistical}. 

Remarkably, the insistence of an unbending player on fairness can rein in not only extortioners but also a much broader family of strategies. In the present work, we further illustrate such a steering role toward altruism by considering the adaptive learning dynamics of a focal reactive player whose move in the current round depends on what the co-player did in the last round~\cite{nowak1992tit} against another fixed unbending co-player in a donation game~\cite{hilbe2013evolution}. In so doing, reactive players not only are de facto ZD strategies but also can be conveniently visualized within a unit square. It turns out that unbending strategies from different classes either ``train" the reactive player to behave like generous TFT~\cite{nowak1993strategy} or a full cooperator. Therefore, unbending players can helm the adaptive learning dynamics of extortionate and reactive players to fairness even in the absence of population dynamics and evolution. Our findings help illuminate ideas for multi-agent optimization in computational learning theory.

\section{Adaptive Learning Dynamics against Unbending Players} 

\paragraph{Payoff structure -- equal gains from switching}
As a simplified form of the PD game, the donation game features two parameters $b$ and $c$, representing the benefit and the cost of cooperation. The payoff structure follows $R = b - c$, $S = -c$, $T = b$, and $P = 0$. As a particular type of memory-one strategies, reactive strategies can be described by the vector $\bm{p} = [p_1, p_2, p_1, p_2]$, where $p_1$ and $p_2$ are the probabilities to cooperate after a cooperation or a defection by the co-player, respectively, and are known as the reactive norm. Reactive strategies in fact become ZD strategies under the setting of a donation game satisfying ``equal gains from switching", namely, $R + P = T + S = b - c$.

\begin{proposition}[Reactive strategies versus Zero-Determinant strategies]
In the donation game, reactive strategies is a subset of general Zero-Determinant strategies with the normalization factor $\phi = 1/(b\chi + 1)$ and the extortion factor $\chi$ being either positive or negative. 
\label{thm:reactive}
\end{proposition}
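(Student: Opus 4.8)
The plan is to show that any reactive strategy satisfies the defining condition of a Zero-Determinant (ZD) strategy and to read off the stated parameters. Recall the ZD framework (Press and Dyson): a memory-one strategy $\bm{p}=[p_1,p_2,p_3,p_4]$ is ZD precisely when the vector $\tilde{\bm{p}}=[p_1-1,p_2-1,p_3,p_4]$ can be written as a linear combination of the payoff vectors $\bm{S}_X=[R,S,T,P]$, $\bm{S}_Y=[R,T,S,P]$, and the all-ones vector $\bm{1}=[1,1,1,1]$, namely $\tilde{\bm{p}}=\phi\bigl[(\bm{S}_X-O\bm{1})-\chi(\bm{S}_Y-O\bm{1})\bigr]$ for some scalars $O$, $\chi$, $\phi$. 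The first step is therefore to write out this system componentwise for a reactive strategy, where the defining constraints are $p_3=p_1$ and $p_4=p_2$.

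Next I would substitute the donation-game payoffs $R=b-c$, $S=-c$, $T=b$, $P=0$, which enforce the ``equal gains from switching'' identity $R+P=T+S=b-c$ highlighted just before the statement. Writing the four scalar equations for $\tilde{\bm{p}}$, the reactive constraints $p_1-1=\phi[(R-O)-\chi(R-O)]$ paired with $p_1=\phi[(T-O)-\chi(S-O)]$ (and likewise the $p_2$ pair) should become consistent only for a specific choice of $\phi$; the idea is that the two equations that are ``supposed'' to pin down $p_3$ and $p_4$ independently collapse onto the $p_1$ and $p_2$ equations exactly because of the equal-gains identity. Subtracting the paired equations eliminates $O$ and $\chi$ from the combination $T-S$ and $R-P$ and yields $1 = \phi\,(T-S)(1)=\phi(b+\ldots)$; carrying the arithmetic through the donation-game values is the mechanism that produces the claimed $\phi = 1/(b\chi+1)$. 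I would then solve the remaining equations for $O$ and $\chi$ in terms of $p_1$ and $p_2$, confirming that every $(p_1,p_2)\in[0,1]^2$ corresponds to a genuine ZD strategy and that $\chi$ may come out either positive or negative depending on the reactive norm, which establishes the ``subset'' claim in the stated direction.

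The main obstacle I anticipate is purely bookkeeping rather than conceptual: keeping the sign conventions and the exact definition of the ZD linear relation consistent with the paper's convention $s_X-O=\chi(s_Y-O)$, since an error there flips the role of $\chi$ and of which payoff vector is subtracted, and would sabotage the clean $\phi=1/(b\chi+1)$ form. To guard against this, I would pin down $\phi$ first by using only the two equal-gains-invariant combinations (so the answer is forced before any sign ambiguity in $O,\chi$ enters), and then back out $O$ and $\chi$; a useful sanity check is that the equalizer and TFT-like reactive strategies should land on the expected $\chi$ values. Finally, I would note that because reactive strategies live in a two-parameter family while general ZD strategies form a three-parameter family, the containment is strict, which is consistent with the word ``subset'' in the statement.
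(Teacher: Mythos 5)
Your plan is essentially the paper's own proof: both start from the Press--Dyson parametrization of ZD strategies, substitute the donation-game payoffs, use equal gains from switching ($R+P=T+S$) to collapse the $p_3$ and $p_4$ equations onto the $p_1$ and $p_2$ ones, and then invert to recover $(O,\chi,\phi)$ from the reactive norm $(p_1,p_2)$. One heads-up when you carry out the arithmetic: the consistency condition is $1=\phi[(R-S)\chi+(T-R)]$, which gives $\phi=1/(b\chi+c)$ --- matching the denominators $b\chi+c$ in the paper's displayed formulas --- so the stated form $\phi=1/(b\chi+1)$ only appears after normalizing the cost to $c=1$.
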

\begin{proof}
For any PD game, the set of ZD strategies is a collection of memory-one strategies $[p_1, p_2, p_3, p_4]$ with three free parameters $(O, \chi, \phi)$:
\begin{align}
\begin{split}
p_1 &= 1 - \phi(R - O)(\chi - 1), \\
p_2 &= 1 - \phi[(T - O)\chi + (O - S)], \\
p_3 &= \phi[(O - S)\chi + (T - O)], \\
p_4 &= \phi(O - P)(\chi - 1).
\end{split}
\end{align}
The four $p_i$'s describe the probability to cooperate after each pairwise outcome in \{CC, CD, DC, DD\}, respectively. 

In the donation game where the payoff matrix $[R, S, T, P]$ is replaced by $[b - c, -c, b, 0]$, the set of ZD strategies can be further written as
\begin{align}
\begin{split}
p_1 &= 1 - \phi(b - c - O)(\chi - 1), \\
p_2 &= 1 - \phi[(b - O)\chi + (O + c)], \\
p_3 &= \phi[(O + c)\chi + (b - O)], \\
p_4 &= \phi O(\chi - 1).
\end{split}
\end{align}
Given that the baseline payoff is between the punishment $P$ and the reward $R$ and that $0 \leq p_i \leq 1$ for $i = 1, 2, 3, 4$, the admissible ranges of $(O, \chi)$ are
\begin{equation}
\begin{gathered}
0 \leq O \leq b - c, \\
1 \leq \chi < +\infty \, \text{or} \,
-\infty < \chi \leq -\max\{\frac{O + c}{b - O}, \frac{b - O}{O + c}\}.
\end{gathered}
\end{equation}
Once these two parameters are decided, the range of $\phi$ can be derived accordingly ($\bm{p}$ needs to be a probability vector). If $\phi = 1/(b\chi + 1)$, it is straightforward to show that 
\begin{align}
\begin{split}
p_1 &= p_3 =  1 - \frac{(b - c - O)(\chi - 1)}{b\chi + c}, \\
p_2 &= p_4 = \frac{O(\chi - 1)}{b\chi + c},
\end{split}
\end{align}
and the linear relation $O(1 - p_1) - (b - c - O)p_2 = 0$ holds. That is, any reactive strategy $\bm{p} = [p_1, p_2, p_1, p_2]$ can be obtained by letting 
\begin{equation}
\begin{cases}
O = (b - c)p_2/(1 - p_1 + p_2), \\
\chi = [b - c(p_1 - p_2)]/[b(p_1 - p_2) - c], \\
\phi = 1/(b\chi + 1).
\end{cases}
\end{equation}
\end{proof}

Geometrically, the unit square defined by the two components $p_1$ and $p_2$ of reactive strategies contains the isosceles right triangle representing reactive ZD strategies with the extortion factor $\chi \geq 1$, whose size is determined by the benefit-to-cost ratio $r = b/c$. The lengths of both legs are in fact $1 - 1/r$. As is shown in Fig.~\ref{fig1}, the horizontal leg AB, the vertical leg AC, and the hypotenuse BC of the triangle ABC correspond to extortioners, compliers, and equalizers, respectively. More detailed illustrations are given in Table~\ref{tab:reactive_ZD}. 

\begin{figure}[htbp]
\centering
\includegraphics[width=0.45\textwidth]{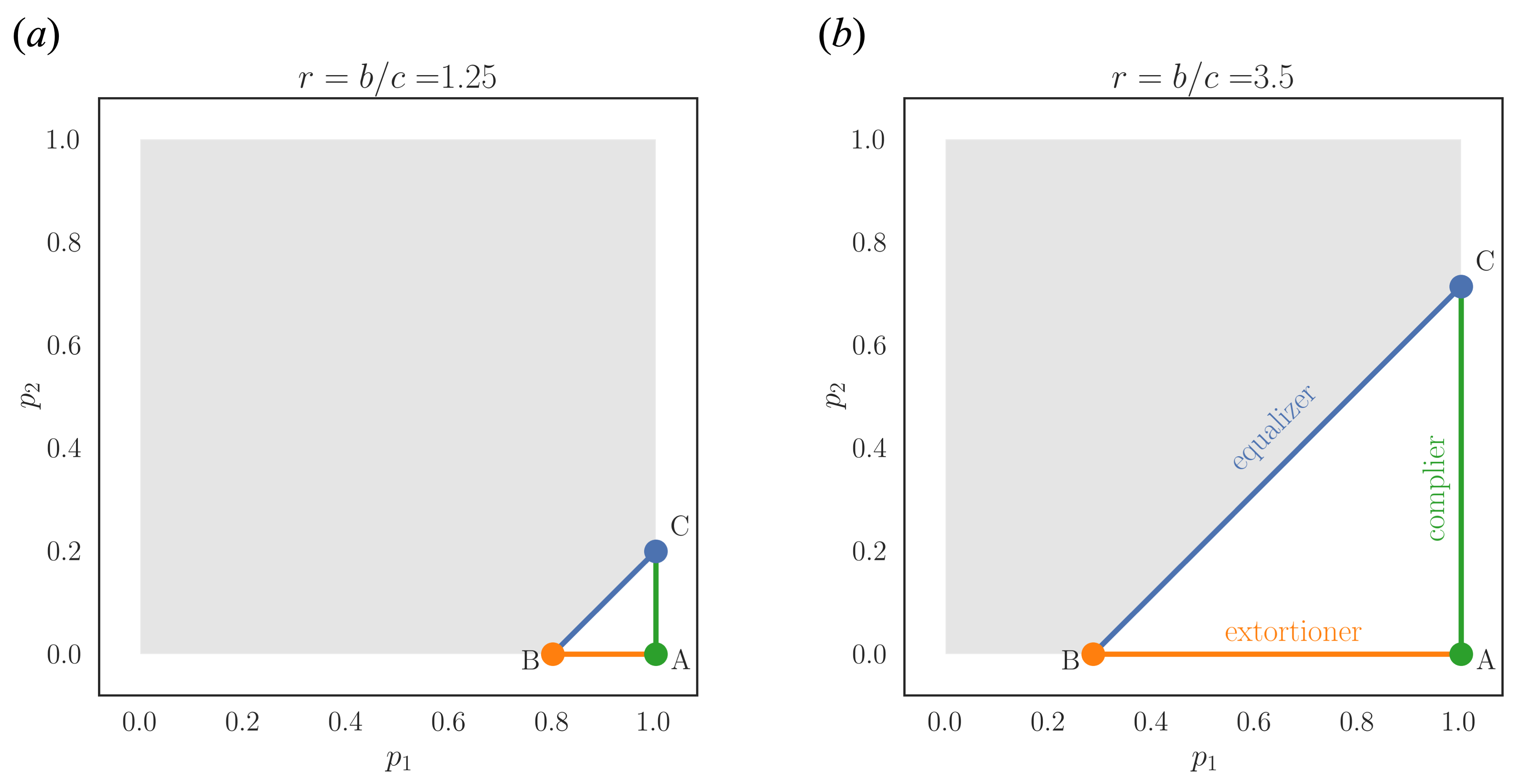}
\caption{Reactive strategies as a subset of general Zero-Determinant (ZD) strategies with the normalization factor $\phi = 1/(b\chi + 1)$ in the donation game. In each panel, the set of ZD strategies with a positive extortion factor $\chi$ is shaded in white (the isosceles right triangle), that with a negative $\chi$ in gray (the pentagon), and their union is the set of reactive strategies (the unit square). The three particular types of ZD strategies, known as extortioners, compliers, and equalizers, are highlighted in orange, green, and blue, respectively. }
\label{fig1}
\end{figure}

\begin{table}[htbp]
\caption{Examples of reactive strategies in the donation game that in fact have properties of Zero-Determinant (ZD) strategies.}
\centering
\tabulinesep=1.5mm
\begin{tabu}{| c | c | c |}
\hline
Type & ZD parameters &Reactive expressions \\
\hline
\multirow{2}{*}{extortioner} & $O = 0$ & $p_1 = 1 - \frac{(b - c)(\chi - 1)}{b\chi + c}$ \\
& $\chi > 1$ & $p_2 = 0$ \\
\hline
\multirow{2}{*}{complier} & $O = b - c$ & $p_1 = 1$ \\
& $\chi > 1$ & $p_2 = \frac{(b - c)(\chi - 1)}{b\chi + c}$ \\
\hline
\multirow{2}{*}{equalizer} & $0 \leq O \leq b - c$ & $p_1 = \frac{O + c}{b}$ \\
& $\chi = +\infty$ & $p_2 = \frac{O}{b}$ \\
\hline
\end{tabu}
\label{tab:reactive_ZD}
\end{table}

\paragraph{Steering control by unbending strategies}
As a routine, we assume that the focal player X uses strategy $\bm{p}$ and the co-player Y uses strategy $\bm{q}$. We then denote the average payoff of player X by $s_X$ and that of player Y by $s_Y$. Earlier on, we uncovered multiple general classes of fair-minded co-players who decide to fix their strategies such that a focal extortionate player can maximize their payoffs only if trying to be fair by letting $\chi \to 1$.  

\begin{definition}[Unbending strategies]
Let player X uses an extortionate Zero-Determinant strategy $\bm{p}$ of which the parameters satisfy $O = P$, $\chi > 1$, and $\phi > 0$. An unbending strategy $\bm{q}$ used by player Y is a memory-one strategy that (i) neutralizes the parameter $\phi$ in the first place such that both the two expected payoffs $s_X$ and $s_Y$ are independent of $\phi$, $\partial s_X/\partial \phi = \partial s_Y/\partial \phi = 0$, and (ii) guarantees that the derivative of $s_X$ with respect to $\chi$ is strictly negative, $\partial s_X/\partial \chi < 0$.
\label{defn:unbending}
\end{definition}

To scrutinize the behavior and hence the learning dynamics of a self-interested focal player from a much broader strategy space than those in previous studies (see Fig.~\ref{fig1}, the entire square instead of only those on the horizontal line AB), we now consider the scenario where $\bm{p}$ represents a reactive strategy used by an adapting player and $\bm{q}$ is a fixed unbending strategy from class A or class D.

\begin{example}[Class A of unbending strategies]
In the donation game, an unbending strategy $\bm{q} = [q_1, q_2, q_3, q_4]$ from class A can be described as: 
\begin{gather}
q_1 = 1, q_3 = 0, \\
q_a < q_2 < 1, 0 < q_4 \leq h_A(q_2).
\label{eq:unbending_A}
\end{gather}
The critical values for $q_2$ and $q_4$ are $q_a = (b - c)c/(b^2 + bc - c^2)$, $h_a$, $h_{Aa}$, and $h_{A}$ (for future reference as well):
\begin{equation}
\begin{cases}
h_a(q_2) = (bq_2 - c)(1 - q_2)/[b(1 - q_2) + c], \\
h_{Aa}(q_2) = (b - c)(1 - q_2)/c, \\
h_A(q_2) = (\ast)/(\ast\ast), \\
\end{cases}
\label{eq:h_A}
\end{equation}
where $(\ast) = (b - c)(1 - q_2)[(b^2 + bc - c^2)q_2 - (b - c)c]$ and $(\ast\ast) = bc^2q_2^2 - (b - c)(b^2 - bc - c^2)q_2 + (b - c)^2(b + c)$.
\label{ex:unbending_A}
\end{example}

\begin{example}[Class D of unbending strategies]
In the donation game, an unbending strategy $\bm{q} = [q_1, q_2, q_3, q_4]$ from class D can be described as: 
\begin{gather}
q_4 = h_D(q_1, q_2, q_3) = q_2 + q_3 - q_1, \\
\begin{aligned}
\begin{split}
d_D(q_1, q_2, q_3) = b(q_2 - q_1) + c(q_3 - q_1) + c &< 0, \\
q_2 + q_3 - q_1 &> 0.
\end{split}
\end{aligned}
\label{eq:h_D}
\end{gather}
Equivalently, class D of unbending strategies is the set of general Zero-Determinant (ZD) strategies with the baseline payoff $O$ satisfying $0 < O \leq b - c$. To be more specific, it is the relative complement of extortionate ZD strategies with respect to the set of general ZD strategies with the extortion factor $\chi$ satisfying $\chi > 1$, consisting of
\begin{enumerate}[(i)]
\item intermediate ZD: $0 < O < b - c$ and $q_1 < 1$,
\item generous ZD: $O = b - c$ and $q_1 = 1$. 
\end{enumerate}
\label{ex:unbending_D}
\end{example}

\begin{remark}
For any unbending strategy $\bm{q}$ from class A, $h_A(q_2) > h_a(q_2)$ and $h_{Aa}(q_2) > h_a(q_2)$ always hold. An intuitive comparison between $h_{A}$, $h_{Aa}$, and $h_a$ is given in Fig.~\ref{fig2}. For any unbending strategy $\bm{q}$ from class D, $q_3 > 0$ always holds.
\end{remark}

\begin{figure*}[htbp]
\centering
\includegraphics[width=0.8\textwidth]{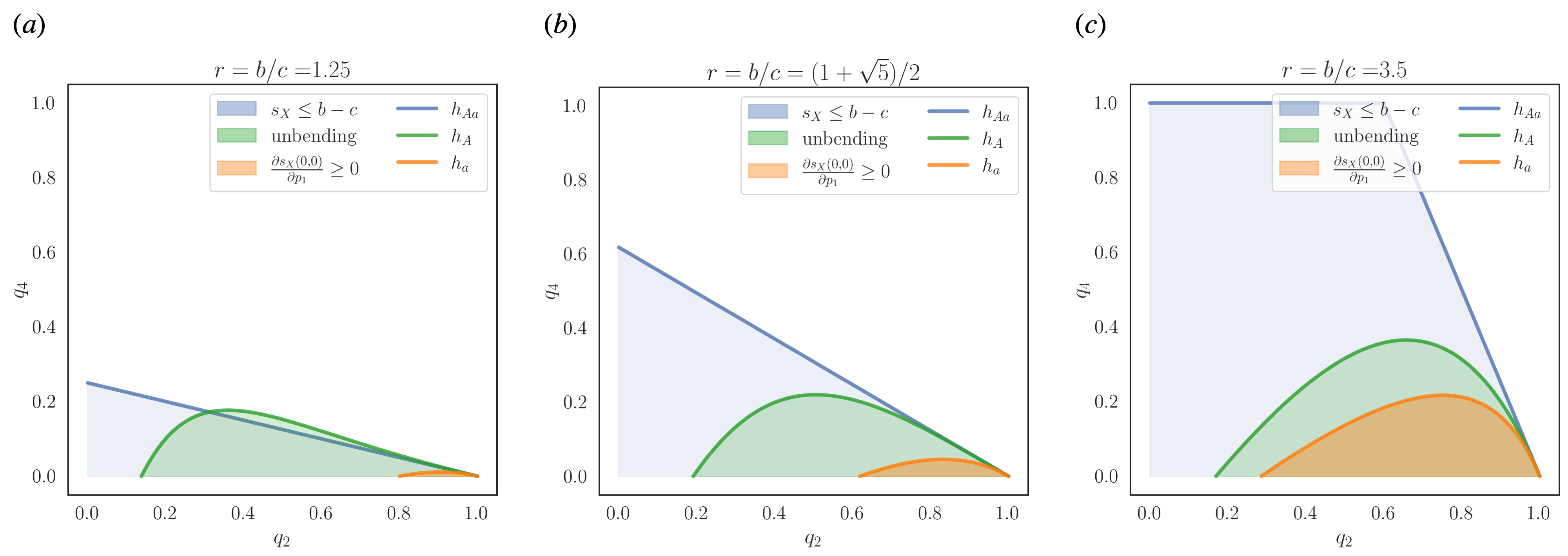}
\caption{Steering control by unbending strategies from class A. A fixed unbending player can exert unilateral influence on both the possible maximum payoff and the direction of strategy choices of any adaptive learning co-player. We find a golden ratio of the underlying payoff matrix, i.e., the benefit-to-cost ratio $r = b/c > (1 + \sqrt{5})/2$ that ensures the global maximum payoff at $(1, 0)$.}
\label{fig2}
\end{figure*}

As shown forthwith, we study how a fixed unbending player Y can curb the extortion and thus promote the cooperation of a self-interested, adaptive focal player X who explores the entire strategy space to get the highest possible payoff in the corresponding learning dynamics. Player X is assumed to use a reactive strategy within the parameter space $(p_1, p_2)$ which is broader than that of extortionate ZD strategies (see Table~\ref{tab:reactive_ZD}). To prove convergence results on possible learning outcomes, we investigate the monotonicity of the payoff $s_X$ with respect to both $p_1$ and $p_2$ and find where the maximum value can be achieved against unbending strategies from class A and class D, separately. We demonstrate that any evolutionary reactive player aiming to maximize their own payoff will be steered by an unbending co-player from extortion to fairness. For the sake of demonstration and ease of visualization, we have focused on the donation game in which reactive strategies are actually a subset of ZD strategies. It is straightforward to extend our results to more general IPD games.

\section{Steering Control Over Self-Interested Players}

In this section, we demonstrate the unilateral steering control by unbending players that use a fixed strategy from class A. Later on, we also show similar steering control by unbending players from class D. In order to visualize the learning dynamics, we choose to focus on the two-dimensional reactive strategies without loss of generality. Our analysis can be extrapolated for other memory-one strategies. In what follows, we explicitly prove the two major objectives of steering control over opponents, that is, characterize conditions for (i) yielding possible maximum payoffs and (ii) rendering fair and cooperative strategies, by using fixed unbending strategies in the adaptive learning dynamics.


\section*{the impact of unbending class a on steered learning dynamics} 

Let $\bm{p} = [p_1, p_2, p_1, p_2]$ be a reactive strategy and $\bm{q} = [1, q_2, 0, q_4]$ an unbending strategy from class A (as shown in Example~\ref{ex:unbending_A}). Using the method by Press and Dyson~\cite{press2012iterated}, we obtain the payoff $s_X(p_1, p_2)$ as a quadratic rational function of $p_1$ and $p_2$, resulting from the quotient of two determinants. If $p_1 = 1$, for example, $s_X(1, p_2) = s_Y(1, p_2) = b - c$. If $p_1 = p_2 = 0$, for another example, $s_X(0, 0) = bq_4/(1 - q_2 + q_4)$. In general, we have
\begin{equation}
s_X(p_1, p_2) - s_X(1, p_2) = \frac{(1 - p_1)\Delta(p_1, p_2)}{f_{A}(p_1, p_2)},
\label{eq:payoff_A}
\end{equation} 
where the denominator $f_{A}(p_1, p_2) = (1 - q_2)[(1 - p_1) - q_4(p_2 - p_1)^2] + q_4(1 - p_2)$ is always positive and
\begin{align}
\begin{split}
\Delta(p_1, p_2) &= (\ast) - (\ast\ast), \\
(\ast) &=  [bq_2q_4 + c(1 - q_2) - (b - c)q_4](1 - p_2), \\
(\ast\ast) &= b(1 - q_2)(q_4p_1 + 1 - q_4). \\
\end{split}
\label{eq:Delta}
\end{align}

\subsection{Maximum of the Payoff}

We claim that the maximal value of $s_X$ is obtained either at $p_1 = 1$  or $p_1 = p_2 = 0$. In addition to the proofs below, specific examples can be found in Table~\ref{tab:streamplot_A}. 

\begin{table*}[htbp]
\caption{Adaptive learning dynamics towards fairness and cooperation against unbending strategies from class A. The 2-dimensional stream plots (gradient vector field) from the adaptive learning dynamics of a self-interested reactive player X against a fixed unbending player Y are given with respect to $p_1$ and $p_2$. The color of the curves and dots $(p_1, p_2)$ corresponds to the payoff values of X in situ, as specified by the given color bar. The solid point on the corner denotes where the maximum value of $s_X$ lies (if there is more than one maximum point, only the bottom one with $p_2 = 0$ is shown). The other solid point indicates where $\partial s_X(p_1, 0)/\partial p_1$ becomes zero and the empty point is $(c/b, 0)$ (the left boundary of extortionate ZD strategies).}
\centering
\tabulinesep=1.5mm
\begin{tabu}{| c | c  c | c | c |}
\hline
& \multicolumn{2}{c|}{$r < r^{\ast}$} & $r = r^{\ast}$ & $r > r^{\ast}$ \\
&  \makecell[cc]{$\max s_X =$ \\ $s_X(0, 0)$} & \makecell[cc]{$\max s_X =$ \\ $s_X(1, p_1)$} 
& \makecell[cc]{$\max s_X =$ \\ $s_X(1, p_1)$} & \makecell[cc]{$\max s_X =$ \\ $s_X(1, p_1)$} \\
\hline
$\frac{\partial s_X(0, 0)}{\partial p_1} < 0$ & \raisebox{-.5\height}{\includegraphics[width=3.25cm]{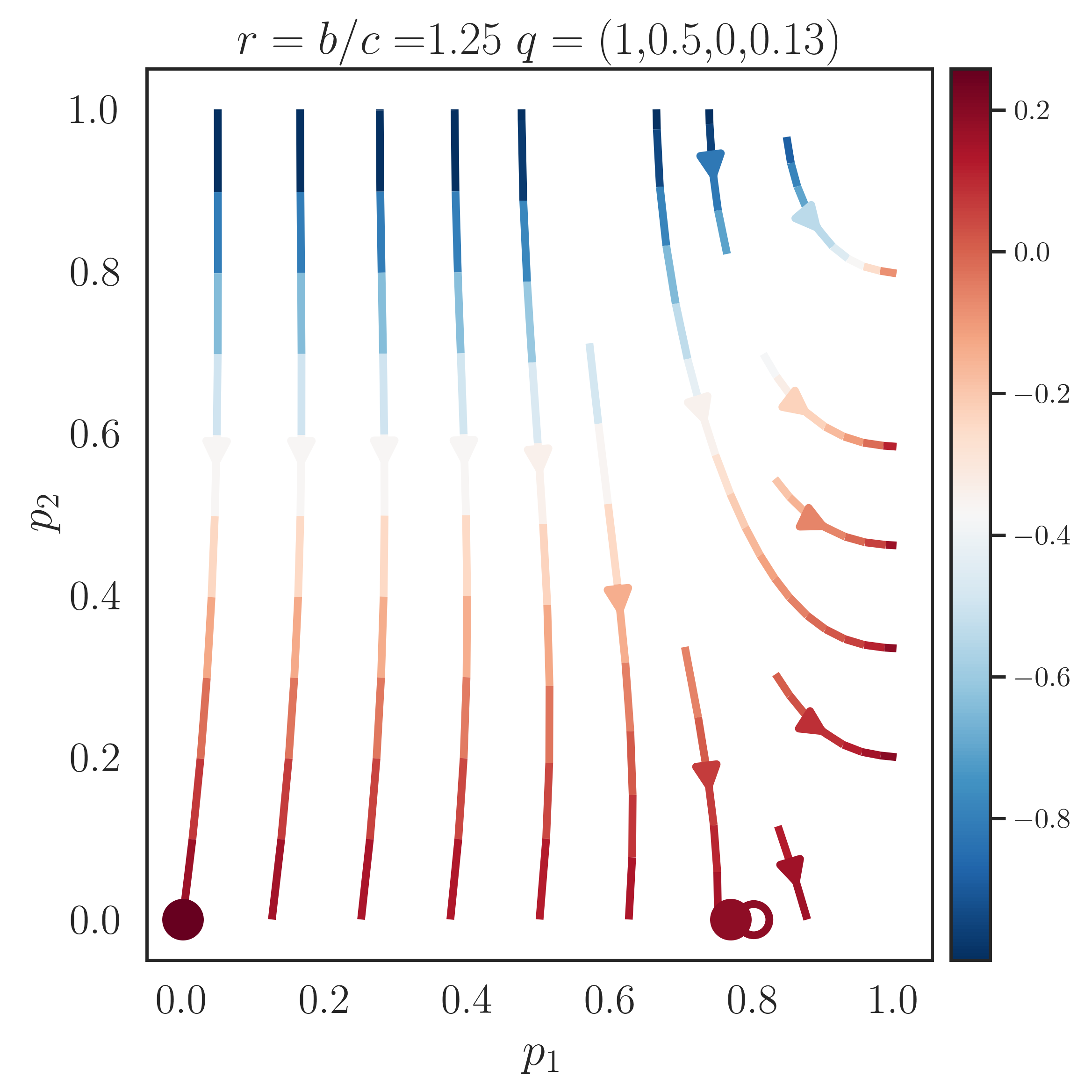}}
& \raisebox{-.5\height}{\includegraphics[width=3.25cm]{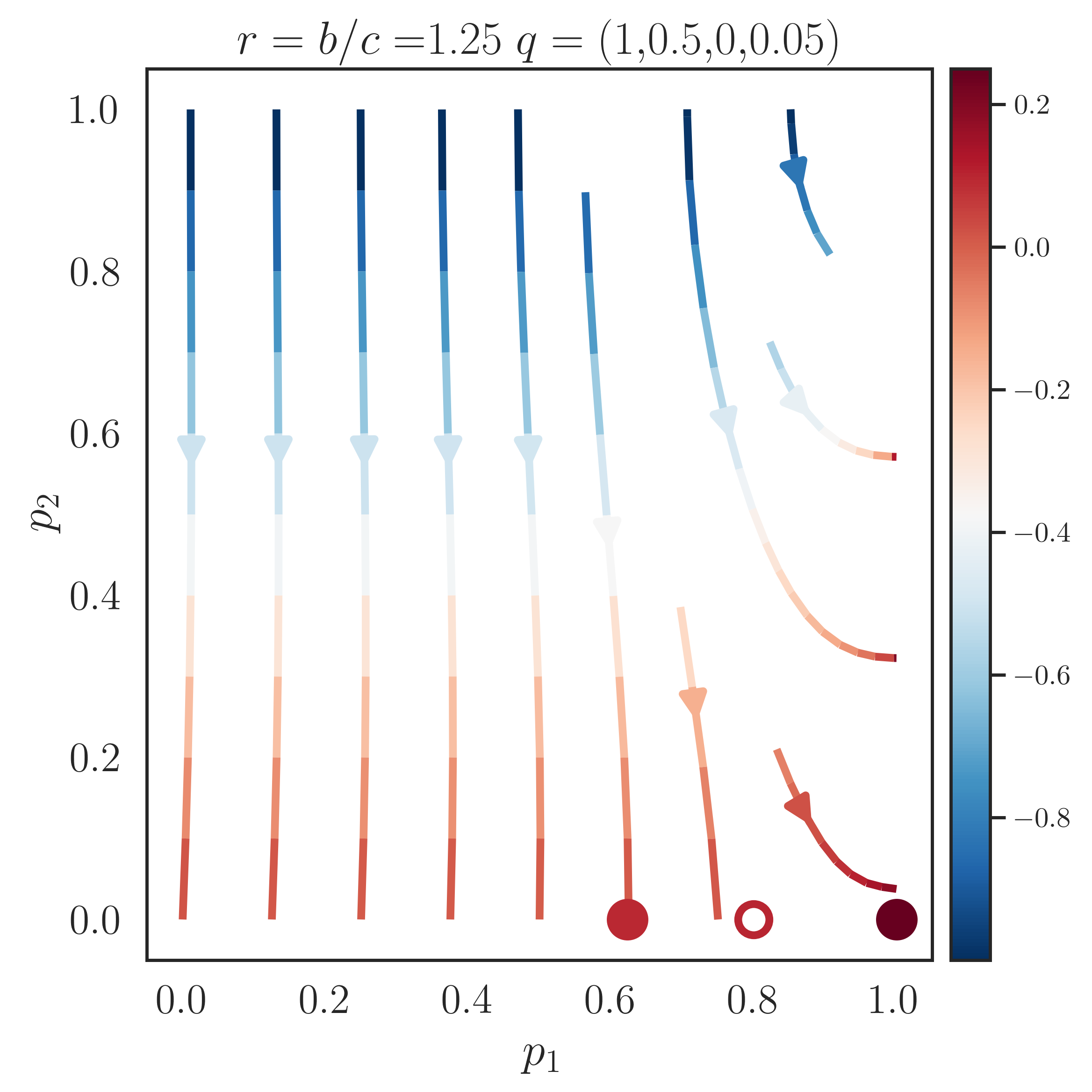}} 
& \raisebox{-.5\height}{\includegraphics[width=3.25cm]{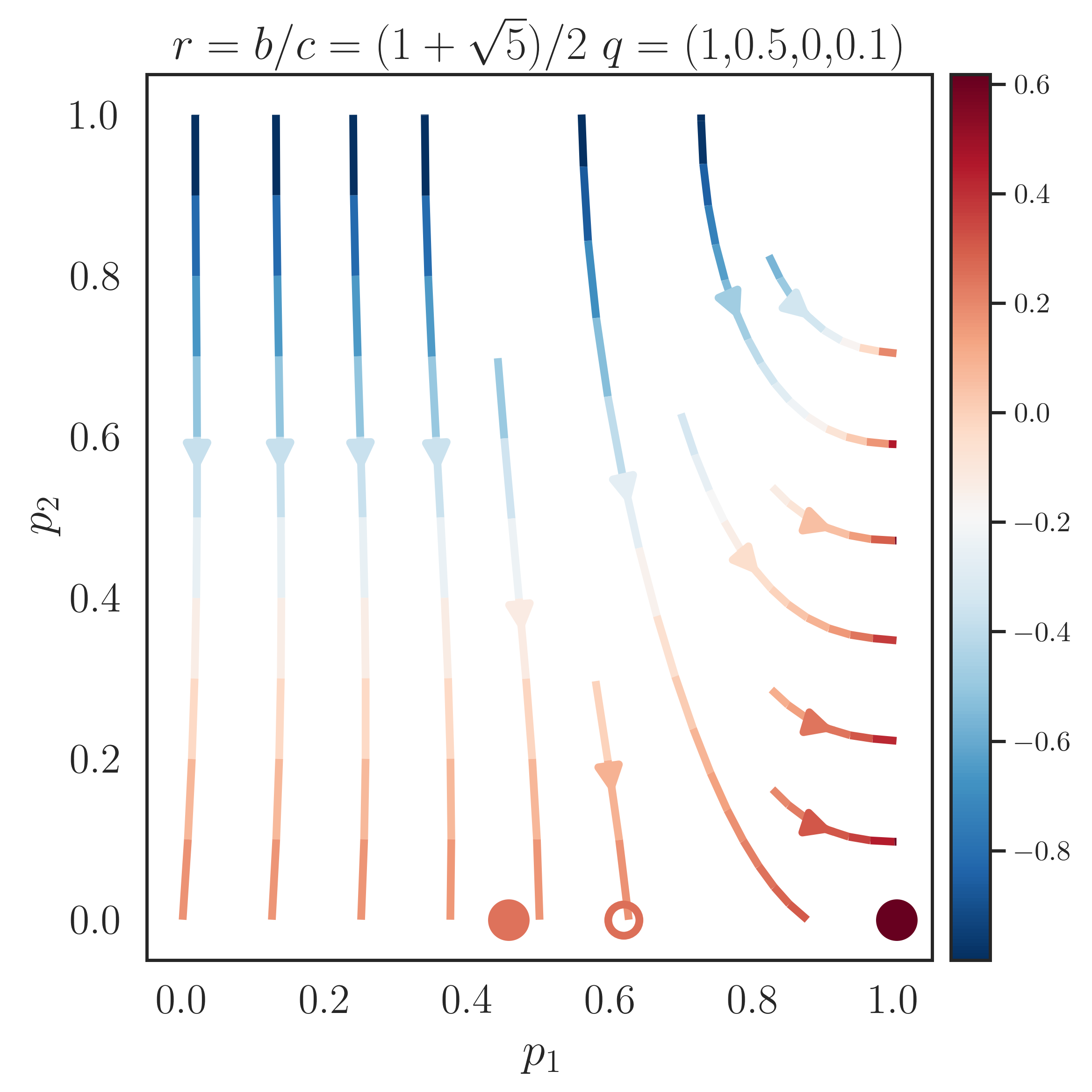}} 
&  \raisebox{-.5\height}{\includegraphics[width=3.25cm]{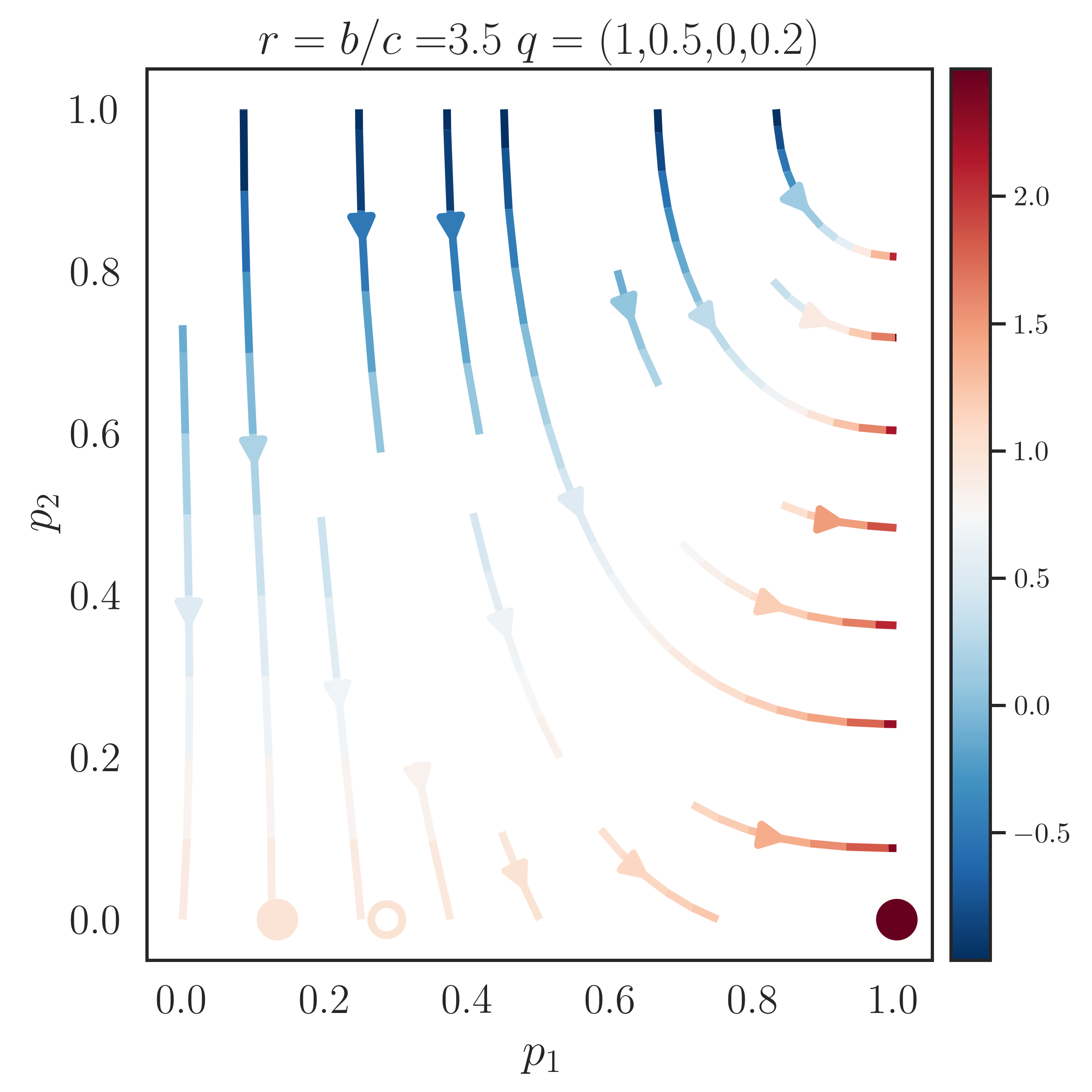}} \\
\hline
$\frac{\partial s_X(0, 0)}{\partial p_1} > 0$ & 
& \raisebox{-.5\height}{\includegraphics[width=3.25cm]{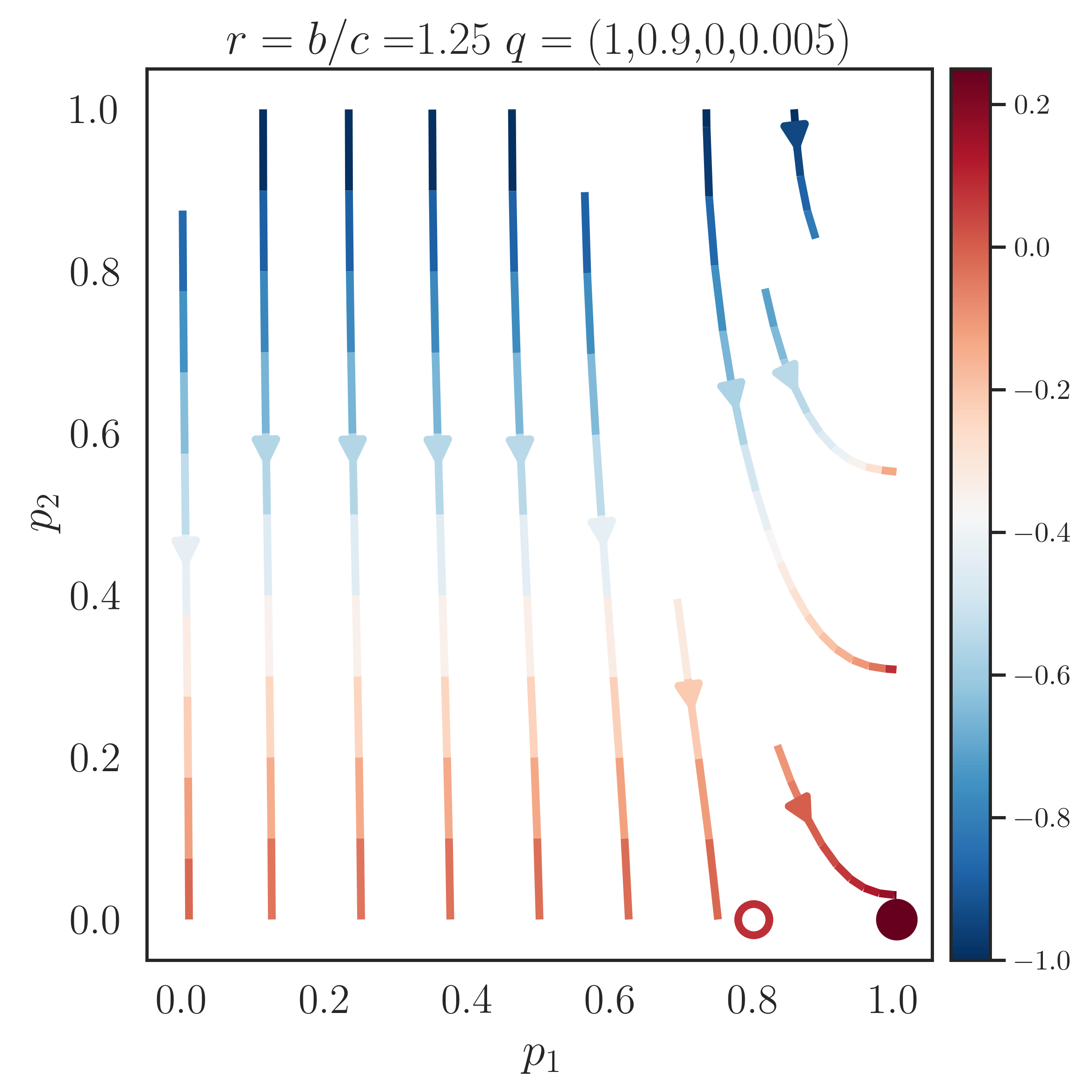}}
& \raisebox{-.5\height}{\includegraphics[width=3.25cm]{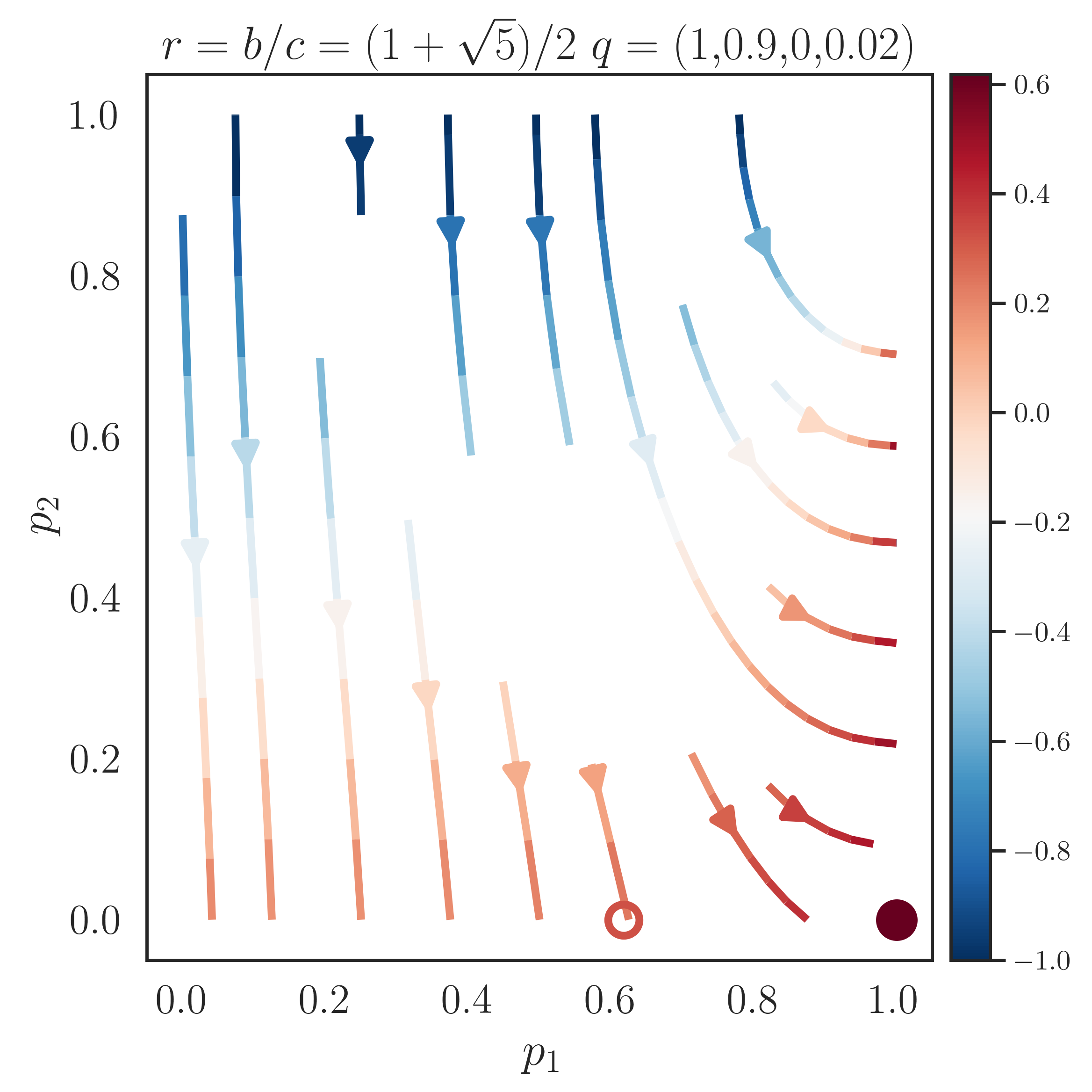}}
& \raisebox{-.5\height}{\includegraphics[width=3.25cm]{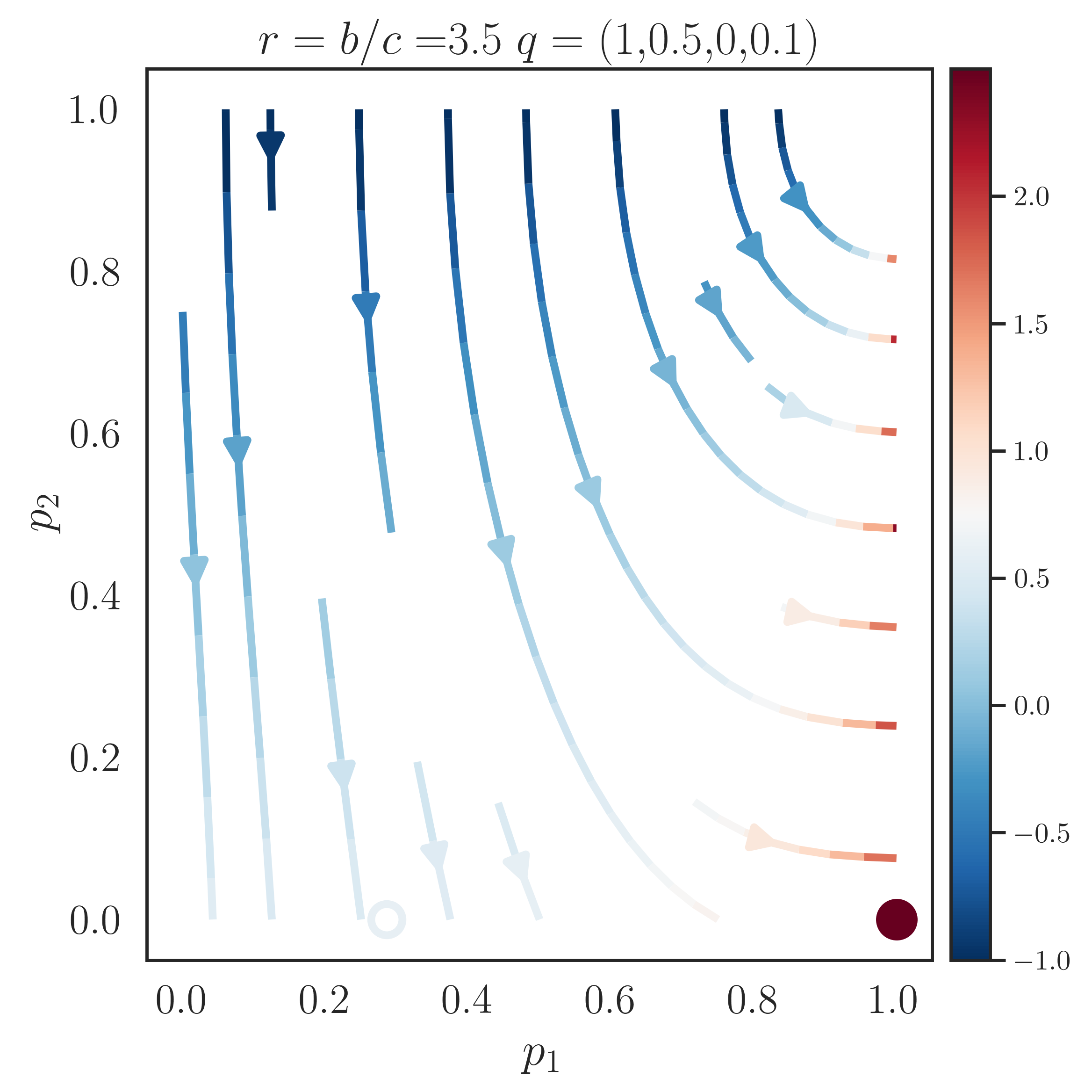}} \\
\hline
\end{tabu}
\label{tab:streamplot_A}
\end{table*}

\begin{theorem}[Maximum on the boundaries]
Assume that player X uses a reactive strategy $\bm{p}$ and that player Y uses an unbending strategy $\bm{q}$ from class A. The maximum of the expected payoff $s_X$ is
\begin{equation}
\max s_X(p_1, p_2) = 
\begin{cases}
s_X(1, p_2), & q_4 \leq h_{Aa}(q_2) \\
s_X(0, 0), & q_4 > h_{Aa}(q_2)
\end{cases}
\end{equation}
where $h_{Aa}(q_2)$ is given in~\eqref{eq:h_A}. 
\label{thm:maximum_A}
\end{theorem}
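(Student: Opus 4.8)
The plan is to build everything on the representation \eqref{eq:payoff_A}, which writes $s_X$ as the constant value $b-c$ taken along the edge $p_1=1$ plus a correction term $(1-p_1)\Delta/f_A$. Since $f_A>0$ on the whole unit square and $1-p_1\ge 0$, the sign of $s_X(p_1,p_2)-(b-c)$ coincides with the sign of $\Delta(p_1,p_2)$ for every $p_1<1$, while $s_X(1,p_2)\equiv b-c$. Hence the global maximum over $[0,1]^2$ equals $\max\{\,b-c,\ \max_{\{\Delta>0\}} s_X\,\}$, and the problem splits into locating the region $\{\Delta\ge 0\}$ and then maximizing $s_X$ on it.

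First I would analyze $\Delta$ from \eqref{eq:Delta}, which is affine in each variable with $\partial\Delta/\partial p_1=-bq_4(1-q_2)<0$ and $\partial\Delta/\partial p_2=-A$, where $A:=bq_2q_4+c(1-q_2)-(b-c)q_4$. Being affine, $\Delta$ attains its maximum over the square at a corner; the monotone decrease in $p_1$ rules out the corners with $p_1=1$, and a direct evaluation gives $\Delta(0,1)=-b(1-q_2)(1-q_4)<0$ together with $\Delta(0,0)=cq_4-(b-c)(1-q_2)$. Since $\Delta(0,0)-\Delta(0,1)=A$, one finds that $\max_{[0,1]^2}\Delta>0$ if and only if $\Delta(0,0)>0$, i.e.\ if and only if $q_4>h_{Aa}(q_2)=(b-c)(1-q_2)/c$, and in that case necessarily $A>0$, so that $(0,0)$ is the unique maximizer of $\Delta$. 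This immediately disposes of the first branch: when $q_4\le h_{Aa}(q_2)$ we have $\Delta\le 0$ throughout, hence $s_X\le b-c=s_X(1,p_2)$ and the maximum is attained along the edge $p_1=1$.

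It remains to treat $q_4>h_{Aa}(q_2)$, where $A>0$ and $\{\Delta>0\}$ is a nonempty region containing the origin; here I must show $\max s_X=s_X(0,0)=bq_4/(1-q_2+q_4)$, which exceeds $b-c$ precisely by $\Delta(0,0)/(1-q_2+q_4)>0$. Outside $\{\Delta>0\}$ and along the top edge $p_2=1$ we already have $s_X\le b-c<s_X(0,0)$, so only the region $\{\Delta>0\}$ matters. The strategy is to reduce to the bottom edge by proving $\partial s_X/\partial p_2\le 0$ there---equivalently $Af_A+\Delta\,\partial f_A/\partial p_2\ge 0$, which I expect to follow from $A>0$, $\Delta\ge 0$ and the sign of $\partial f_A/\partial p_2=q_4[2(1-q_2)(p_2-p_1)-1]$---and then to study the one-dimensional function $s_X(p_1,0)$. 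On this edge the numerator of $\partial s_X(p_1,0)/\partial p_1$ is a cubic whose only zero in $(0,1)$ is the local minimum marked in Table~\ref{tab:streamplot_A}; consequently $s_X(p_1,0)$ decreases from $s_X(0,0)$ down to $b-c$ and the edge maximum sits at $p_1=0$. Comparing the two surviving candidates $s_X(0,0)$ and $s_X(1,p_2)=b-c$ then yields the stated dichotomy. The main obstacle is this last, genuinely two-dimensional step: because $f_A$ contains the quadratic term $q_4(p_2-p_1)^2$, neither partial derivative of $s_X$ has an obviously signed numerator, and pinning down the sign of $\partial s_X/\partial p_2$ together with the single interior critical point of $s_X(\cdot,0)$ is where the class~A constraints $q_2>q_a$ and $q_4\le h_A(q_2)$ from \eqref{eq:unbending_A} must be invoked.
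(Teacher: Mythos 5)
Your treatment of the case $q_4\le h_{Aa}(q_2)$ is complete and follows the same route as the paper: reduce to the sign of $\Delta$ via \eqref{eq:payoff_A}, exploit that $\Delta$ is affine so its maximum over the square sits at a corner, and observe $\Delta(0,0)=cq_4-(b-c)(1-q_2)=c\bigl(q_4-h_{Aa}(q_2)\bigr)$. Your corner argument is in fact slightly more careful than the paper's, which simply asserts that $\Delta$ decreases in both variables (i.e.\ that your quantity $A\ge 0$); your observation that $A=\Delta(0,0)-\Delta(0,1)$ is automatically positive whenever $\Delta(0,0)>0$ neatly sidesteps having to verify that sign from the class~A constraints.

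The genuine gap is in the branch $q_4>h_{Aa}(q_2)$, and you have correctly diagnosed where it is but not closed it. Knowing that $\Delta$ attains its (positive) maximum at $(0,0)$ does not by itself give $\max s_X=s_X(0,0)$: in $s_X-(b-c)=(1-p_1)\Delta/f_A$ the denominator is not minimized at the origin (e.g.\ $f_A(0,p_2)=1-q_2+q_4-q_4p_2-(1-q_2)q_4p_2^2<f_A(0,0)$ for $p_2>0$), so the numerator-max/denominator-min heuristic fails and a pointwise comparison is required. Your proposed repair --- show $\partial s_X/\partial p_2\le 0$ where it matters, drop to the edge $p_2=0$, and show $s_X(\cdot,0)$ has a single interior critical point which is a minimum --- is exactly the content of the paper's subsequent monotonicity results (the parabola analysis of $g_A$ with $e_2<0$ and axis of symmetry to the right of $p_1=1$, plus the proposition that $\partial s_X/\partial p_2<0$ at $p_1=0$), but you leave it as an announced plan rather than a proof, and you explicitly flag that you have not pinned down the sign of $\partial s_X/\partial p_2$. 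Two small inaccuracies in the sketch: after the factorization $\partial s_X/\partial p_1=(1-p_2)q_4 g_A/f_A^2$ the relevant numerator on $p_2=0$ is the quadratic $g_A(p_1,0)$, not a cubic; and $s_X(p_1,0)$ is not monotone decreasing from $s_X(0,0)$ to $b-c$ in this regime --- it dips below $b-c$ and rises back to it at $p_1=1$ --- so the edge comparison must go through $s_X(0,0)>b-c$, which fortunately is exactly what $\Delta(0,0)>0$ supplies. To be fair, the paper's own proof of this theorem is equally terse at this point and only becomes airtight once the later monotonicity statements are in hand.
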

\begin{proof}
According to~\eqref{eq:Delta}, $\Delta(p_1, p_2)$ decreases with respect to both $p_1$ and $p_2$, whose maximal and minimal values are thus 
\begin{equation}
\begin{cases}
\Delta(0, 0) = c(q_4 - h_{Aa}), \\ 
\Delta(1, 1) = -b(1 - q_2),
\end{cases}
\end{equation}
respectively. Combining this observation with~\eqref{eq:payoff_A}, we decide that $p_1 = 1$ is a level curve with a local maximum. Moreover, the greatest payoff is achieved at $p_1 = 1$ if $\Delta(0, 0) \leq 0$ and $(0, 0)$ otherwise.
\end{proof}

We can further figure out whether the payoff $s_X$ always takes its maximum at $p_1 = 1$ by comparing $h_A$ and $h_{Aa}$ in ~\eqref{eq:h_A}.

\begin{corollary}[Rule of the golden ratio]
Assume that player X uses a reactive strategy $\bm{p}$ and that player Y uses an unbending strategy $\bm{q}$ from class A. The maximum of the expected payoff $s_X$ is always achieved at $p_1 = 1$ if the benefit-to-cost ratio $r \geq (1 + \sqrt{5})/2$ (the golden ratio).
\end{corollary}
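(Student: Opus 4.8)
The plan is to reduce the claim to Theorem~\ref{thm:maximum_A}. By that theorem the maximum of $s_X$ sits at $p_1 = 1$ precisely when $q_4 \leq h_{Aa}(q_2)$, and jumps to the corner $(0,0)$ otherwise. Every class-A strategy already obeys the defining bound $0 < q_4 \leq h_A(q_2)$ from Example~\ref{ex:unbending_A}. Hence it suffices to prove the pointwise inequality $h_A(q_2) \leq h_{Aa}(q_2)$ for every admissible $q_2 \in (q_a, 1)$: this forces $q_4 \leq h_A(q_2) \leq h_{Aa}(q_2)$ for all class-A strategies, so the maximum is always attained at $p_1 = 1$.

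First I would record that both $h_A$ and $h_{Aa}$ in~\eqref{eq:h_A} carry the common factor $(b-c)(1-q_2)$, which is strictly positive on the admissible range, and that the denominator $(\ast\ast)$ of $h_A$ is positive there as well: since the numerator $(\ast)$ is positive for $q_2 > q_a$, positivity of $h_A$ (guaranteed by $0 < q_4 \leq h_A$) forces $(\ast\ast) > 0$. With the common factor divided out and the positive denominators cleared, the sign of $h_{Aa}(q_2) - h_A(q_2)$ is governed, up to strictly positive multiplicative factors, by the quadratic
\[
g(q_2) = c^2 q_2^2 - (b^2 - bc + c^2)\,q_2 + b(b - c).
\]
Thus the whole problem collapses to deciding when $g(q_2) \geq 0$ throughout $(q_a, 1)$.

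The key observation is that $g(1) = 0$, which lets me factor
\[
g(q_2) = c^2\,(q_2 - 1)\,(q_2 - \alpha), \qquad \alpha = \frac{b(b - c)}{c^2}.
\]
Since $q_2 - 1 < 0$ on the admissible range, $g(q_2) \geq 0$ is equivalent to $q_2 \leq \alpha$, and this holds for every $q_2 < 1$ exactly when $\alpha \geq 1$. Finally, $\alpha \geq 1 \iff b(b-c) \geq c^2 \iff b^2 - bc - c^2 \geq 0$, and writing $r = b/c$ this is $r^2 - r - 1 \geq 0$, i.e. $r \geq (1 + \sqrt{5})/2$. This identifies the golden ratio as the exact threshold and completes the argument.

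I expect the main obstacle to be the algebraic bookkeeping that produces $g$ in its clean factored form: one must expand $c\,(\ast\ast)\,[h_{Aa} - h_A]$ fully, watch the cancellation of the cubic terms in the coefficient of $q_2$, and verify that the constant term collapses to $b^2(b-c)$ so that $q_2 = 1$ is genuinely a root. The only conceptual subtlety is confirming $(\ast\ast) > 0$ on $(q_a, 1)$, so that clearing the denominator preserves the direction of the inequality; this can be checked directly from the convexity of $(\ast\ast)$ together with its positive endpoint values, or simply inferred from the well-posedness of the class-A bound $h_A > 0$.
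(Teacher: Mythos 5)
Your proposal is correct and follows essentially the same route as the paper: both arguments reduce the claim to comparing $h_A$ and $h_{Aa}$ via Theorem~\ref{thm:maximum_A}, identify $q_2 = 1$ and $q_2 = b(b-c)/c^2$ as the crossing points, and translate $b(b-c)/c^2 \geq 1$ into $r \geq (1+\sqrt{5})/2$. You simply carry out more explicitly the algebra (the factored quadratic and the positivity of the denominator $(\ast\ast)$) that the paper dismisses as a routine calculation.
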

\begin{proof}
The solutions to $h_{A}(q_2) = h_{Aa}(q_2)$ are $q_2 = 1$ and $q_2 = q_{A} = b(b - c)/c^2$. A routine calculation shows that $q_A > q_a$ and $h_{Aa}(q_A) > 0$. Therefore, whether $q_A \geq 1$ determines whether $\max s_X(p_1, p_2) = s_X(1, p_2)$ always holds, which further yields a comparison between the benefit-to-cost ratio $r$ and the golden ratio $(1 + \sqrt{5})/2$. 
\end{proof}

More specifically,
\begin{enumerate}[(i)]
\item if $1 < r < (1 + \sqrt{5})/2$, 
\begin{equation}
\max s_X(p_1, p_2) = 
\begin{cases}
s_X(1, p_2), & 0 < q_4 \leq h_{Aa} \\
s_X(0, 0), & h_{Aa} < q_4 \leq h_A
\end{cases}
\end{equation}
\item if $r \geq (1 + \sqrt{5})/2$, $\max s_X(p_1, p_2) = s_X(1, p_2)$.
\end{enumerate}

\subsection{Monotonicy of the Payoff}

We first take the partial derivative of the payoff $s_X$ with respect to $p_1$ and get
\begin{equation}
\frac{\partial s_X(p_1, p_2)}{\partial p_1} =\frac{ (1 - p_2)q_4g_A(p_1, p_2)}{f_A^2(p_1, p_2)},
\end{equation}
where $f_A(p_1, p_2)$ is defined as before and $g_A(p_1, p_2) = e_2p_1^2 + e_1p_1 + e_0$ is a quadratic function of $p_1$. We have
\begin{equation}
\begin{cases}
e_2 = -(1 - q_2)[b(1 - q_2) + c]q_4 - c(1 - q_2)^2, \\
e_1 = 2(1 - q_2)[b(1 - q_2)p_2 + b + c]q_4 + 2c(1 - q_2)^2,\\
\end{cases}
\end{equation} 
and 
\begin{align}
\begin{split}
e_2 + e_1 + e_0 &=  [p_2 + q_2(1 - p_2)][(\ast)q_4 + (\ast\ast)], \\
(\ast) &= [b(1 - q_2) - c](1 - p_2), \\
(\ast\ast) &= (1 - q_2)[b - c(1 - p_2)].
\end{split}
\end{align}

\begin{theorem}[Monotonicity along the first axis]
Assume that player X uses a reactive strategy $\bm{p}$ and that player Y uses an unbending strategy $\bm{q}$ from class A. Let $p_2$ be fixed. The expected payoff $s_X$ either always increases or first decreases and then increases with $p_1$.
\end{theorem}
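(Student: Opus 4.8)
The plan is to reduce the claim to the sign behaviour of the single quadratic $g_A(\cdot,p_2)$ and then exploit its concavity. Since $\partial s_X/\partial p_1=(1-p_2)q_4\,g_A(p_1,p_2)/f_A^2(p_1,p_2)$ and the prefactor $(1-p_2)q_4/f_A^2$ is nonnegative on the admissible domain (because $p_2\le 1$, $q_4>0$, and $f_A$ has already been shown to be always positive), the sign of the derivative in $p_1$ coincides everywhere with the sign of $g_A$. It therefore suffices to understand how $g_A(\cdot,p_2)$ changes sign as $p_1$ runs over $[0,1]$, with $p_2$ held fixed.

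First I would record that $g_A$ is \emph{concave} in $p_1$, i.e.\ its leading coefficient $e_2$ is strictly negative. This is immediate from $e_2=-(1-q_2)[b(1-q_2)+c]q_4-c(1-q_2)^2$: with $0<q_2<1$, $b>c>0$, and $q_4>0$ every summand is negative. Concavity sharply restricts the admissible sign patterns: a concave function attains its minimum over an interval at one of the endpoints, so the set on which $g_A<0$ can only be a left piece $[0,a)$ and/or a right piece $(b,1]$. Consequently the forbidden shapes ``increase then decrease'', ``decrease--increase--decrease'', and ``always decrease'' all force $g_A(1,p_2)<0$, while the two advertised shapes correspond exactly to $g_A(1,p_2)\ge 0$.

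The decisive step is thus to prove $g_A(1,p_2)\ge 0$ for every fixed $p_2\in[0,1]$; combined with concavity this pins the sign sequence of $g_A$ to be either identically nonnegative (so $s_X$ increases throughout) or negative on a single initial segment and then nonnegative (so $s_X$ first decreases and then increases), which is precisely the statement. Using the supplied factorization $g_A(1,p_2)=e_2+e_1+e_0=[p_2+q_2(1-p_2)]\bigl([b(1-q_2)-c](1-p_2)q_4+(1-q_2)[b-c(1-p_2)]\bigr)$, the first factor is positive on the whole range, and the second factor is affine in $p_2$ with the positive value $(1-q_2)b$ at $p_2=1$; hence its minimum over $[0,1]$ is attained at $p_2=0$, where it equals $[b(1-q_2)-c]q_4+(1-q_2)(b-c)$. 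When $b(1-q_2)\ge c$ this is manifestly nonnegative, so the only genuine work lies in the regime $b(1-q_2)<c$ (equivalently $q_2>1-c/b$), where nonnegativity is equivalent to the cap $q_4\le (1-q_2)(b-c)/[c-b(1-q_2)]$.

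The main obstacle is exactly verifying that the class-A bound $q_4\le h_A(q_2)$ enforces this cap on the sub-range $q_2\in(1-c/b,1)$, that is, $h_A(q_2)\le (1-q_2)(b-c)/[c-b(1-q_2)]$ there (note $1-c/b>q_a$, so this sub-range sits inside the admissible interval). I expect this to be a finite but tedious rational-function inequality: clearing the sign-controlled denominators of the explicit $h_A$ in~\eqref{eq:h_A} turns it into a polynomial inequality in $q_2$, whose sign is governed by the factor $(b^2+bc-c^2)q_2-(b-c)c$ already appearing in the numerator of $h_A$ together with the positivity of the denominator quadratic of $h_A$. Once this inequality is established, the concavity argument closes the proof, and the dichotomy ``always increasing'' versus ``first decreasing then increasing'' is resolved simply by whether $g_A(0,p_2)$ is itself nonnegative or not.
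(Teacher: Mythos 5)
Your proposal follows essentially the same route as the paper's proof: both reduce the statement to the sign of the downward-opening quadratic $g_A(\cdot,p_2)$, and both rest on the two facts $e_2<0$ and $g_A(1,p_2)=e_2+e_1+e_0\ge 0$. The paper additionally computes $2e_2+e_1=2b(1-q_2)q_4[p_2+q_2(1-p_2)]>0$ to place the vertex of the parabola strictly to the right of $p_1=1$, so that $g_A$ is increasing on $[0,1]$; your observation that concavity alone confines $\{g_A<0\}\cap[0,1]$ to at most two end-pieces, of which the right one is excluded by $g_A(1,p_2)\ge 0$, renders that step unnecessary --- a mild simplification. The one place where your write-up is incomplete is precisely where the paper is too terse: you correctly note that $g_A(1,p_2)\ge 0$ is not manifest when $b(1-q_2)<c$, and you reduce it to the inequality $h_A(q_2)\le (b-c)(1-q_2)/[bq_2-(b-c)]$ on $q_2\in(1-c/b,1)$, but you leave that inequality unverified. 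It does hold, so the gap closes: writing $D(q_2)$ for the denominator of $h_A$ in~\eqref{eq:h_A} (positive on the class-A range, as it must be for $h_A$ to serve as a positive upper bound), cross-multiplication turns the claim into the nonnegativity of
\begin{align*}
&D(q_2)-\bigl[(b^2+bc-c^2)q_2-(b-c)c\bigr]\bigl[bq_2-(b-c)\bigr]\\
&\qquad=b(b-c)\bigl[(b-c)+3cq_2-(b+2c)q_2^2\bigr],
\end{align*}
and the bracket is a downward parabola in $q_2$ with roots $1$ and $-(b-c)/(b+2c)<0$, hence positive on all of $(0,1)$. With that computation supplied, your argument is complete and yields exactly the dichotomy asserted in the theorem.
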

\begin{proof}
Since $e_2 < 0$ and $2e_2 + e_1 = 2b(1 - q_2)q_4[p_2 + q_2(1 - p_2)] > 0$, the graph of $g_A$ as a function of $p_1$ is a parabola opening downward whose axis of symmetry is to the right of $p_1 = 1$. A routine calculation shows that $g_A(1, p_2) = e_2 + e_1 + e_0 > 0$. Therefore, as $p_1$ increases, $\partial s_X/\partial p_1$ is either always nonnegative or first negative and then positive for any fixed $p_2$.
\end{proof}

We can obtain more detailed results about $\partial s_X/\partial p_1$ on the boundaries of the unit square. 

\begin{corollary}[Monotonicity on the boundaries]
Assume that player X uses a reactive strategy $\bm{p}$ and that player Y uses an unbending strategy $\bm{q}$ from class A. Let $p_2 = 0$. The monotonicity of the expected payoff $s_X$ with respect to $p_1$ satisfies
\begin{enumerate}[(i)]
\item if $0 < q_4 \leq h_a(q_2)$, $\partial s_X/\partial p_1$ is always nonnegative,
\item if $h_a(q_2) < q_4 < h_A(q_2)$, $\partial s_X/\partial p_1$ is first negative and then positive,
\end{enumerate}
where $h_{a}(q_2)$ is given in~\eqref{eq:h_A}.
\end{corollary}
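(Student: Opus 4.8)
The plan is to reduce the entire claim to the sign of the quadratic $g_A$ at the single corner $(p_1,p_2)=(0,0)$. Because $\partial s_X/\partial p_1=(1-p_2)q_4\,g_A(p_1,p_2)/f_A^2(p_1,p_2)$ and the prefactor $(1-p_2)q_4/f_A^2$ is strictly positive for $p_2<1$ and $q_4>0$, the sign of the derivative along the edge $p_2=0$ coincides with the sign of $g_A(p_1,0)$. From the preceding theorem I already know that $g_A(\cdot,p_2)$ is a downward parabola whose axis of symmetry lies to the right of $p_1=1$, so it is \emph{strictly increasing} on $[0,1]$, and that $g_A(1,p_2)>0$. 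Consequently the whole dichotomy hinges on $g_A(0,0)=e_0\big|_{p_2=0}$: if this value is nonnegative, monotonic increase forces $g_A\ge 0$ throughout $[0,1]$, so $\partial s_X/\partial p_1\ge 0$ everywhere; if it is negative, the increasing $g_A$ crosses zero exactly once before reaching the positive value $g_A(1,0)$, giving the ``first negative, then positive'' pattern.

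The second step is to evaluate $e_0$ at $p_2=0$. Rather than derive it from scratch I would recover it from the supplied data via $e_0=(e_2+e_1+e_0)-e_2-e_1$. Setting $p_2=0$ gives $e_2=-(1-q_2)[b(1-q_2)+c]q_4-c(1-q_2)^2$, $e_1=2(1-q_2)(b+c)q_4+2c(1-q_2)^2$, and $e_2+e_1+e_0=q_2\{[b(1-q_2)-c]q_4+(1-q_2)(b-c)\}$. Collecting the $q_4$-terms and writing $u=1-q_2$, the coefficients of $bu^2$ and of $uc$ cancel, leaving the coefficient $-[b(1-q_2)+c]$; collecting the $q_4$-free terms and factoring out $(1-q_2)$ leaves $bq_2-c$. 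Thus I expect the clean form
\begin{equation}
e_0\big|_{p_2=0}=(1-q_2)(bq_2-c)-[b(1-q_2)+c]\,q_4 .
\end{equation}

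The final step is to read off the threshold. Since $b(1-q_2)+c>0$, the inequality $e_0\big|_{p_2=0}\ge 0$ is equivalent to
\begin{equation}
q_4\le\frac{(bq_2-c)(1-q_2)}{b(1-q_2)+c}=h_a(q_2),
\end{equation}
which is exactly the boundary in the statement. Combined with the reduction of the first paragraph, this yields case~(i) when $0<q_4\le h_a(q_2)$ and case~(ii) when $h_a(q_2)<q_4<h_A(q_2)$, the upper bound $h_A(q_2)$ being inherited from membership in class~A rather than used in the monotonicity argument itself. I anticipate the only real obstacle to be the bookkeeping in the cancellation of step two together with sign tracking: one must confirm that $h_a(q_2)$ is genuinely positive on the relevant range (equivalently $bq_2>c$), so that case~(i) is non-vacuous, which ties the threshold back to the admissible interval for $q_2$ imposed by class~A.
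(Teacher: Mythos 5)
Your proposal is correct and follows essentially the same route as the paper: both reduce the dichotomy to the sign of $\partial s_X/\partial p_1$ at $(0,0)$ via the preceding monotonicity theorem, and your computed value $e_0\big|_{p_2=0}=(1-q_2)(bq_2-c)-[b(1-q_2)+c]q_4$ reproduces exactly the paper's expression $[b(1-q_2)+c][h_a(q_2)-q_4]q_4/(1-q_2+q_4)^2$ after multiplying by the prefactor $q_4/f_A(0,0)^2$ with $f_A(0,0)=1-q_2+q_4$. Your closing caveat about $h_a(q_2)$ possibly being nonpositive (when $q_2\le c/b$) only affects whether case (i) is vacuous, not the validity of the dichotomy.
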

\begin{proof}
Based on the proposition above, it suffices to consider $\partial s_X/\partial p_1$ at $(0, 0)$, that is,
\begin{equation}
\left.\frac{\partial s_X(p_1, p_2)}{\partial p_1}\right|_{(0, 0)} 
= \frac{[b(1 - q_2) + c][h_a(q_2) - q_4]q_4}{(1 - q_2 + q_4)^2}.
\end{equation}
The sign of ${\partial s_X(p_1, p_2)}/{\partial p_1}$ at $(0, 0)$ is determined by the relation between $q_4$ and $h_a(q_2)$.
\end{proof}

\begin{remark}
In fact, $\partial s_X/\partial p_1$ is always positive at $(c/b, 0)$ and hence positive for any $c/b \leq p_1 \leq 1$. Namely, if player X uses a reactive extortionate ZD strategy and Y uses an unbending strategy from class A, the expected payoff $s_X$ will be irreverent to $p_2$ and an increasing function of $p_1$, which echoes with what we have found in previous studies. 
\end{remark}

\begin{example}[]
Assume that player X uses a reactive strategy $\bm{p}$ and that player Y uses an unbending strategy $\bm{q}$ from class A. Let 
\begin{equation}
\bm{q} = [1, (b - c)/b, 0, (2b - 3c)/4b].
\end{equation}
It is easy to tell that $(b - 2c)/2b = h_a < q_4 < h_A = (b - c)/2b$. The partial derivative of the expected payoff $s_X$ with respect to $p_1$ satisfies
\begin{equation}
\left.\frac{\partial s_X(p_1, p_2)}{\partial p_1}\right|_{p_2 = 0} 
\begin{cases}
< 0, & 0 \leq p_1 < c/(2b - c) \\
= 0, & p_1 = c/(2b - c) \\
> 0. & c/(2b - c) < p_1 \leq 1
\end{cases}
\end{equation}
\end{example}

We then take the partial derivative of the payoff $s_X$ with respect to $p_2$. The general expression of $\partial s_X/\partial p_2$ is long and complicated. We only enumerate the results on the boundaries.

\begin{proposition}[Monotonicity along the second axis]
Assume that player X uses a reactive strategy $\bm{p}$ and that player Y uses an unbending strategy $\bm{q}$ from class A. The monotonicity of the expected payoff $s_X$ with respect to $p_2$ satisfies
\begin{enumerate}[(i)]
\item when $p_1 = 0$, $\partial s_X/\partial p_2$ is always negative,
\item when $p_1 = 1$, $\partial s_X/\partial p_2$ is always zero,
\item when $p_2 = 1$ and $p_1 < 1$, $\partial s_X/\partial p_2$ is always negative.
\end{enumerate}
\end{proposition}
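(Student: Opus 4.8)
The plan is to work from the closed form~\eqref{eq:payoff_A}, $s_X(p_1,p_2)=(b-c)+(1-p_1)\Delta(p_1,p_2)/f_A(p_1,p_2)$ with $f_A>0$ and $s_X(1,\cdot)\equiv b-c$. Because the prefactor $(1-p_1)$ is free of $p_2$, I get $\partial s_X/\partial p_2=(1-p_1)\,[\Delta_{p_2}f_A-\Delta\,(f_A)_{p_2}]/f_A^2$, so the sign is that of $(1-p_1)\,\Phi$ with $\Phi:=\Delta_{p_2}f_A-\Delta\,(f_A)_{p_2}$. From~\eqref{eq:Delta} I would record the two facts that make everything tractable: $\Delta$ is affine in $p_2$, $\Delta=K(1-p_2)-b(1-q_2)[1-q_4(1-p_1)]$ with the $p$-independent coefficient $K:=c(1-q_2)+q_4[c-b(1-q_2)]$ (so $\Delta_{p_2}=-K$), and $(f_A)_{p_2}=-q_4[2(1-q_2)(p_2-p_1)+1]$. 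Part (ii) is then immediate: at $p_1=1$ the prefactor vanishes, equivalently $s_X(1,\cdot)\equiv b-c$.

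For part (iii) I would substitute $p_2=1$, which kills the affine term $K(1-p_2)$ and leaves $\Delta(p_1,1)=-b(1-q_2)[1-q_4(1-p_1)]$ and $f_A(p_1,1)=(1-q_2)(1-p_1)[1-q_4(1-p_1)]$ sharing the common factor $1-q_4(1-p_1)>0$ (positive as $q_4\le h_A(q_2)<1$). Pulling out $-(1-q_2)[1-q_4(1-p_1)]>0$ collapses the sign of $\Phi$ onto $-\Psi$, where $\Psi=(1-p_1)[K+2bq_4(1-q_2)]+bq_4$. The key cancellation is $K+2bq_4(1-q_2)=c(1-q_2)+q_4[c+b(1-q_2)]$, a sum of positive terms; hence $\Psi>0$ and $\partial s_X/\partial p_2<0$ for every $p_1<1$. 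This case needs no case-splitting.

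Part (i) is where the effort concentrates. Setting $p_1=0$ and factoring out $1-q_2>0$, I would reduce the sign of $\partial s_X/\partial p_2$ to that of a quadratic in $p_2$, $\Theta(p_2)=-Kq_4\,p_2^2+2\Delta(0,0)q_4\,p_2+\Theta(0)$, with $\Delta(0,0)=c(q_4-h_{Aa})$ (the quantity already used in Theorem~\ref{thm:maximum_A}) and $\Theta(0)=-c(1-q_2+q_4)+bq_4(q_4-q_2)$, and then show $\Theta<0$ on $[0,1]$. One endpoint is free: $\Theta(1)<0$ is exactly part (iii) evaluated at $p_1=0$ (the corner $(0,1)$), so only $\Theta(0)<0$ (the corner $(0,0)$, covered by neither (ii) nor (iii)) must be checked against the class A constraints $q_a<q_2<1$, $0<q_4\le h_A(q_2)$.

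The main obstacle is the interior of $[0,1]$. The leading coefficient $-Kq_4$ is not sign-definite over class A --- the bracket $c-b(1-q_2)$ in $K$ changes sign at $q_2=(b-c)/b$ --- so $\Theta$ can be convex or concave. When it is concave ($K>0$) with an interior vertex, which happens only if $\Delta(0,0)>0$, i.e.\ $q_4>h_{Aa}$ (a window that exists only for $r<(1+\sqrt{5})/2$), I must bound the vertex value $\Theta(0)+\Delta(0,0)^2q_4/K$ and keep it negative. This is the one place the admissibility ceiling $q_4\le h_A(q_2)$ is genuinely needed: it caps $\Delta(0,0)=c(q_4-h_{Aa})$, hence the concave ``bump'' $\Delta(0,0)^2q_4/K$, below the comfortably negative $|\Theta(0)|$. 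I expect the cleanest finish to route through the $h_A$-versus-$h_{Aa}$ comparison already exploited in the golden-ratio corollary, showing that $h_{Aa}<q_4\le h_A$ forces $\Delta(0,0)$ small enough for the vertex value to stay below zero.
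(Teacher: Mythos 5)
The paper gives no proof of this proposition (it is explicitly omitted for space), so your attempt can only be judged on its own merits. Your setup is sound and I have checked the algebra: the decomposition $\partial s_X/\partial p_2=(1-p_1)\Phi/f_A^2$ with $\Phi=\Delta_{p_2}f_A-\Delta\,(f_A)_{p_2}$, the identities $\Delta_{p_2}=-K$ and $(f_A)_{p_2}=-q_4[2(1-q_2)(p_2-p_1)+1]$, and the factorizations in parts (ii) and (iii) are all correct; in particular your cancellation $K+2bq_4(1-q_2)=c(1-q_2)+q_4[c+b(1-q_2)]>0$ does finish part (iii) cleanly. Your reduction of part (i) to the quadratic $\Theta(p_2)=-Kq_4p_2^2+2\Delta(0,0)q_4p_2+\Theta(0)$ with $\Theta(0)=-c(1-q_2+q_4)+bq_4(q_4-q_2)$ is also correct (I verified the coefficients), as is $\Theta(1)<0$ via part (iii).

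The genuine gap is that the two inequalities on which part (i) actually turns are announced rather than proved. First, $\Theta(0)<0$ is equivalent to $bq_4^2-(bq_2+c)q_4-c(1-q_2)<0$, which is \emph{not} automatic: for $q_4\le q_2$ it is immediate, but for $q_4>q_2$ it can fail for admissible-looking probabilities (e.g.\ the quadratic in $q_4$ has a positive root below $1$ whenever $(b-c)(1-q_2)>c$), so one must genuinely invoke the ceiling $q_4\le h_A(q_2)$ and show it sits below that root; you defer exactly this. Second, your ``main obstacle'' --- the concave case $K>0$ with interior vertex --- is left as a hoped-for estimate, although it is in fact the easy case: regrouping gives $\Theta(p_2)=-K\bigl[1-q_4p_2(2-p_2)\bigr]-bq_4(1-q_4)\bigl[2(1-q_2)p_2+1\bigr]$, and since $p_2(2-p_2)\le 1$ on $[0,1]$ the first bracket is at least $1-q_4>0$, so $K>0$ forces $\Theta<0$ outright with no vertex analysis. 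The residual work therefore all lives in the convex case $K\le 0$ at the endpoint $p_2=0$, i.e.\ precisely the $\Theta(0)<0$ claim you flagged but did not establish. Until that comparison of $h_A(q_2)$ (or at least of $q_4$) against the positive root of $bq_4^2-(bq_2+c)q_4-c(1-q_2)$ is carried out, part (i) is a plan, not a proof.
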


The detailed proof is omitted for the limit of space.

To summarize, the fixed unbending strategy $\bm{q}$ used by the co-player will determine the performance and the learning dynamics of the focal player. The global maximum payoff is always achieved at $(1, p_2)$ if $\bm{q}$ is taken from a subset of class A (Fig.~\ref{fig2}(a), $q_4 \leq h_{Aa}$) or even the entire set of class A of unbending strategies (Fig.~\ref{fig2}(b) and~\ref{fig2}(c), $q_4 < h_A$) provided that the benefit-to-cost ratio $r$ is at least the golden ratio $(1 + \sqrt{5})/2$. Furthermore, the learning dynamics can exhibit either global convergence (the shaded orange areas in Fig.~\ref{fig2}, $q_4 \leq h_a$) or bistability ($h_a < q_4 < h_A$) along the direction of change of $p_1$, where the final state of the focal player depends on their initial state. Examples are given in Table~\ref{tab:streamplot_A} and Fig.~\ref{fig3}.

\begin{figure*}[htbp]
\centering
\includegraphics[width=0.8\textwidth]{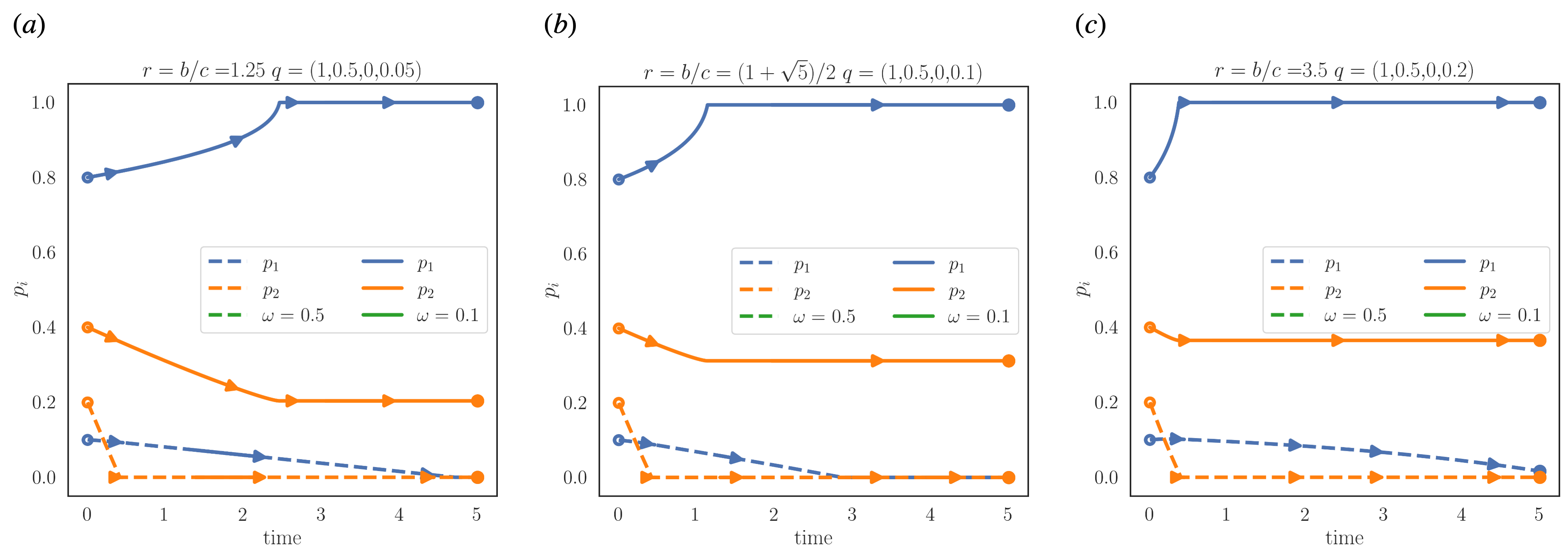}
\caption{Numerical learning processes enforced by unbending strategies from Class A. The learning curves of $p_1$ and $p_2$ of a reactive player are given with respect to time. The empty and the solid points represent the initial and the final states and the arrows indicate the direction. Different time scales are considered for different players illustrated by solid versus dashed curves: larger $\omega$ is for faster learning while smaller $\omega$ for slower learning.}
\label{fig3}
\end{figure*}

\section*{the impact of unbending class d on steered learning dynamics} 

Let $\bm{p} = [p_1, p_2, p_1, p_2]$ be a reactive strategy and $\bm{q} = [q_1, q_2, q_3, q_4]$ an unbending strategy from class D (as shown in Example~\ref{ex:unbending_D}). We obtain $s_X(p_1, p_2)$ as a linear rational function of $p_1$ and $p_2$. If $p_1 = p_2 = 1$, for example, $s_X(1, 1) = [(b - c)q_3 - c(1 - q_1)]/(1 - q_1 + q_3)$. If $p_1 = 1$ and $q_1 = 1$, for another example, the unbending co-player becomes a complier with $O = b - c$ and $s_X = s_Y = b - c$. In general, we have
\begin{equation}
s_X(p_1, p_2) - s_X(1, 1) = \frac{d_{D}(q_1, q_2, q_3)}{1 - q_1 + q_3} \cdot \frac{(\ast)}{(\ast\ast)}, 
\label{eq:payoff_D}
\end{equation}
where $(\ast) = q_3(1 - p_1) + (1 - q_1)(1 - p_2)$ and $(\ast\ast) = 1 - q_1 + q_3 - (q_2 - q_1)(p_2 - p_1)$.

\subsection{Maximum of the Payoff}

We claim that the maximal value of $s_X$ is obtained either at $p_1 = p_2 = 1$ or $p_1 = 1$.

\begin{theorem}[Maximum on the boundaries]
Assume that player X uses a reactive strategy $\bm{p}$ and that player Y uses an unbending strategy $\bm{q}$ from class D. The maximum of the expected payoff $s_X$ is 
\begin{equation}
\max s_X(p_1, p_2) = 
\begin{cases}
s_X(1, 1), & q_1 < 1 \\
s_X(1, p_2), & q_1 = 1
\end{cases}
\end{equation}
where the value of $q_1$ is determined by the baseline payoff $O$ of $\bm{q}$ as a general ZD strategy. 
\end{theorem}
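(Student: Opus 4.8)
The plan is to exploit the factored form of $s_X(p_1, p_2) - s_X(1, 1)$ in~\eqref{eq:payoff_D} and argue that the whole expression never exceeds zero, so that the maximum payoff value is always $s_X(1, 1)$; the two cases $q_1 < 1$ and $q_1 = 1$ then differ only in \emph{where} this maximum is attained. First I would pin down the sign of each factor. The prefactor $d_D(q_1, q_2, q_3)/(1 - q_1 + q_3)$ has a negative numerator by the defining inequality $d_D < 0$ of class D in~\eqref{eq:h_D}, while its denominator satisfies $1 - q_1 + q_3 > 0$ because $q_1 \leq 1$ and $q_3 > 0$ (the latter guaranteed by the Remark on class D). The numerator $(\ast) = q_3(1 - p_1) + (1 - q_1)(1 - p_2)$ is manifestly nonnegative on the unit square, since every factor lies in $[0, 1]$.

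The crux, and what I expect to be the main obstacle, is the denominator $(\ast\ast) = 1 - q_1 + q_3 - (q_2 - q_1)(p_2 - p_1)$: I must show it is strictly positive, so that the quotient's sign is controlled and the rational function is well defined. The key observation is that $(\ast\ast)$ is affine in $(p_1, p_2)$—it enters only through the combination $p_2 - p_1$—so over the convex unit square its minimum is attained at a vertex, and it suffices to check the four corners. At $(0,0)$ and $(1,1)$ it reduces to $1 - q_1 + q_3 > 0$; at $(0,1)$ it becomes $1 - q_2 + q_3 > 0$ since $q_2 \leq 1$ and $q_3 > 0$; and at $(1,0)$, substituting the class D identity $q_4 = q_2 + q_3 - q_1$ from~\eqref{eq:h_D}, it simplifies to $1 - q_1 + q_4 > 0$, which holds because $q_1 \leq 1$ and $q_4 > 0$. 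Hence $(\ast\ast) > 0$ throughout the square.

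With the signs settled—negative prefactor, nonnegative $(\ast)$, positive $(\ast\ast)$—the difference $s_X(p_1, p_2) - s_X(1, 1)$ is $\leq 0$ everywhere, so $\max s_X = s_X(1, 1)$. It then remains only to locate the maximizers by determining when equality $(\ast) = 0$ holds. Since $q_3 > 0$, the first term forces $p_1 = 1$; the second term $(1 - q_1)(1 - p_2)$ behaves differently in the two cases. If $q_1 < 1$ (intermediate ZD), it additionally forces $p_2 = 1$, so the maximum is attained only at $(1, 1)$ and equals $s_X(1, 1)$. If $q_1 = 1$ (generous ZD), the second term vanishes identically, so every point of the edge $p_1 = 1$ is a maximizer with $s_X(1, p_2) = s_X(1, 1) = b - c$ for all $p_2$, matching the stated case distinction.
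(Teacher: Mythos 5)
Your proposal is correct and follows essentially the same route as the paper: both work from the factorization in~\eqref{eq:payoff_D}, establish that the prefactor is negative, $(\ast) \geq 0$, and $(\ast\ast) > 0$, and then read off the equality cases of $(\ast) = 0$ to distinguish $q_1 < 1$ from $q_1 = 1$. The only (harmless) difference is in how $(\ast\ast) > 0$ is verified: the paper uses the bound $-q_3 < q_2 - q_1 < 0$ to get $(\ast\ast) > 1 - q_1 \geq 0$, whereas you minimize the affine expression over the vertices of the unit square, which avoids having to derive $q_2 - q_1 < 0$ from $d_D < 0$.
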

\begin{proof}
The sign of the right-hand side in~\eqref{eq:payoff_D} determines whether $\max s_X(p_1, p_2) = s_X(1, 1)$. According to~\eqref{eq:h_D}, $-q_3 < q_2 - q_1 < 0$. Thus, $(\ast\ast) > 1 - q_1 + q_3 - q_3 \geq 0$. Therefore, we only need to consider the sign of $(\ast)$. It is straightforward to show that $(\ast) \geq 0$, which is an equality if and only if $p_1 = p_2 = 1$ or $p_1 = 1$ and $q_1 = 1$.
\end{proof}



%

\subsection{Monotonicity of the Payoff}

We take the partial derivatives of the payoff $s_X$ with respect to $p_1$ and $p_2$ and get
\begin{align}
\begin{split}
\frac{\partial s_X(p_1, p_2)}{\partial p_1} &= -\frac{[(q_2  - q_1)(1 - p_2) + q_3]d_{D}(q_1, q_2, q_3)}{(\ast\ast)^2}, \\
\frac{\partial s_X(p_1, p_2)}{\partial p_2} &= -\frac{[(q_2 - q_1)p_1 + 1 - q_2]d_{D}(q_1, q_2, q_3)}{(\ast\ast)^2}. \\
\end{split}
\end{align}
where $(\ast\ast)$ is the same as that in~\eqref{eq:payoff_D}. 

\begin{theorem}[Monotonicity across the unit square]
Assume that player X uses a reactive strategy $\bm{p}$ and that player Y uses an unbending strategy $\bm{q}$ from class D. The expected payoff $s_X$ is a strictly increasing function of $p_1$ and an increasing function of $p_2$:
\begin{align}
\begin{split}
\frac{\partial s_X(p_1, p_2)}{\partial p_1} &> 0 \\
\frac{\partial s_X(p_1, p_2)}{\partial p_2} &\geq 0.
\end{split}
\end{align}
The equality holds if and only if $p_1 = 1$ and $q_1 = 1$.
\end{theorem}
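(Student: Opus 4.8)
The plan is to read the signs of the two partial derivatives directly off the closed-form expressions displayed just above the statement, so no fresh differentiation is needed. Both derivatives share the prefactor $-d_D(q_1,q_2,q_3)/(\ast\ast)^2$, and I would first show this prefactor is strictly positive. By the definition of class~D in~\eqref{eq:h_D} we have $d_D < 0$, hence $-d_D > 0$; the denominator is the square of $(\ast\ast) = 1 - q_1 + q_3 - (q_2 - q_1)(p_2 - p_1)$, which was already shown to be strictly positive in the proof of the preceding class-D maximum theorem. Consequently the sign of each partial derivative is governed entirely by the bracketed factor in its numerator.

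Next I would invoke the two-sided bound $-q_3 < q_2 - q_1 < 0$, which follows from~\eqref{eq:h_D} together with $q_i \in [0,1]$ and is already used in the class-D maximum theorem. For $\partial s_X/\partial p_1$ the relevant factor is $(q_2 - q_1)(1 - p_2) + q_3$. Since $q_2 - q_1 < 0$ and $0 \le 1 - p_2 \le 1$, this factor is minimized over $p_2$ at $p_2 = 0$, where it equals $(q_2 - q_1) + q_3 = q_2 + q_3 - q_1 = q_4 > 0$. Hence the factor is at least $q_4 > 0$ for every $(p_1, p_2)$, so $\partial s_X/\partial p_1 > 0$ strictly on the whole unit square.

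For $\partial s_X/\partial p_2$ the relevant factor is $(q_2 - q_1)p_1 + 1 - q_2$. Because $q_2 - q_1 < 0$, this is decreasing in $p_1$, so its minimum over $p_1 \in [0,1]$ occurs at $p_1 = 1$ and equals $(q_2 - q_1) + (1 - q_2) = 1 - q_1 \ge 0$; thus $\partial s_X/\partial p_2 \ge 0$ everywhere. For the equality clause I would determine exactly when this minimum is zero: the factor can vanish only if $p_1 = 1$ (to attain the minimum in $p_1$) and at the same time $1 - q_1 = 0$, i.e.\ $q_1 = 1$, and conversely at $p_1 = 1$, $q_1 = 1$ the factor is precisely $1 - q_1 = 0$. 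This isolates the equality case to exactly $p_1 = 1$ and $q_1 = 1$, as claimed.

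I expect no serious obstacle, since the derivatives are supplied; the work reduces to two one-variable minimizations plus bookkeeping of the equality case. The main point requiring care is to confirm that the unique equality arises solely from the $p_2$-derivative: even at $p_1 = 1$, $q_1 = 1$ the $p_1$-factor still equals $q_4 > 0$, so $\partial s_X/\partial p_1$ remains strictly positive there and the strict monotonicity in $p_1$ is never violated.
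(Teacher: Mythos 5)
Your proof is correct and follows essentially the same route as the paper: read the signs off the displayed derivative formulas using $d_D<0$, $(\ast\ast)>0$, and the bound $-q_3<q_2-q_1<0$. Your handling of the $p_1$-factor is in fact slightly cleaner than the paper's (minimizing at $p_2=0$ to get $q_2+q_3-q_1=q_4>0$ avoids the edge case $p_2=1$, where the paper's bound $q_3p_2$ needs the separate fact $q_3>0$), but the argument is the same in substance.
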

\begin{proof}
We know that $d_D < 0$ and $(\ast\ast) > 0$. Moreover,  $q_2 - q_1 > -q_3$ implies that $(q_2 - q_1)(1 - p_2) + q_3 > q_3p_2 \geq 0$. Hence, we have $\partial s_X/\partial p_1 > 0$. On the other hand, it is easy to tell that $(q_2 - q_1)p_1 + 1 - q_2 \geq 0$, which is an equality if and only if $p_1 = 1$ and $q_1 = 1$.
\end{proof}

To summarize,
\begin{enumerate}[(i)]
\item if Y uses an intermediate ZD strategy, 
\begin{equation}
\begin{gathered}
\max s_X(p_1, p_2) = s_X(1, 1) = \frac{(b - c)q_3 - c(1 - q_1)}{1 - q_1 + q_3}, \\
\frac{\partial s_X(p_1, p_2)}{\partial p_1} > 0 \, \text{and} \, \frac{\partial s_X(p_1, p_2)}{\partial p_2} > 0,
\end{gathered}
\end{equation}
\item if Y uses a generous ZD strategy,
\begin{equation}
\begin{gathered}
\max s_X(p_1, p_2) = s_X(1, p_2) = b - c, \\
\frac{\partial s_X(p_1, p_2)}{\partial p_1} > 0 \, \text{and} \, \frac{\partial s_X(p_1, p_2)}{\partial p_2} \begin{cases} > 0, & p_1 < 1 \\ = 0. & p_1 = 1\end{cases}
\end{gathered}
\end{equation}
\end{enumerate}

We have shown the role of unbending strategies in steering control and particularly their ability in enforcing fair and cooperative outcomes against payoff-maximizing players. Our further analysis reveals that some common IPD strategies are in fact unbending, including the PSO gambler (a machine-trained strategy)~\cite{harper2017reinforcement} and generous ZD strategies~\cite{stewart2013extortion}. In this regard, the framework of IPD has the potential for synergistically combining artificial intelligence (AI) and evolutionary game theory to enhance cooperation and foster fairness in various multi-agent systems. Also, all the unbending strategies from class A and at least those generous ZD strategies from class D are able to establish mutual cooperation among themselves. Moreover, even if noisy games are considered, for example, with probability $\varepsilon$ the intended move is implemented as the opposite one, their mutual cooperation is only impacted as $1 - \mathcal{O}(\varepsilon)$. These findings suggest the robustness and winning advantage of unbending strategies in population competition dynamics beyond the pairwise interactions considered above.

\section{Discussion \& Conclusion}

Our search for unbending strategies is directly motivated by suppressing extortion, thereby requiring targeted interactions with ZD opponents. Nevertheless, our work is more broadly motivated by how to foster and enforce fairness and cooperation in pairwise interactions in the IPD games. We reveal the previously unforeseen unbending properties of many well-known strategies in that they actually have the ability to resist extortion. For example, class A contains ``PSO Gambler'', an optimized memory-one strategy that is unbending when $T + S > 2P$. Other well-known examples of unbending strategies include WSLS from class A when $T + S < 2P$, ``willing'' from class C, and all the strategies from class D which are ZD players themselves but with a higher level of generosity than their opponents.

As such, unbending strategies can be used to rein in seemingly formidable extortionate ZD players, from whom a fair offer can ultimately be cultivated in their own interest. Our findings are in line with recent experimental evidence suggesting that human players often choose not to accede to extortion out of concern for fairness~\cite{hilbe2014extortion}. The conclusion can be further generalized and shed light on the backfire of extortion by unbending players in a broad sense. Our analysis shows that an adapting reactive player X after a higher payoff will cooperate more by increasing $p_1$ and in some cases $p_2$ to avoid potential punishment from a fixed unbending co-player Y in a donation game. Unbending strategies from class A would ``train" the reactive player to behave like generous TFT ($p_1 = 1$) whereas those from class D would even ``train" the reactive player to act as a full cooperator ($p_1 = p_2 = 1$). 

Under the influence of a co-player Y from class A of unbending strategies, the learning dynamics of a reactive player X may exhibit bistability along the direction of change of $p_1$, hence allowing two different learning outcomes, depending on both the original state of the focal player and the specific strategy of the co-player. After entering full cooperation with respect to $p_1$, player X will stay neutral along the direction of change of $p_2$. A reactive strategy can therefore converge to the full defector corner $(0, 0)$ or otherwise to the generous TFT edge $(1, p_2)$. Player X will always get the greatest payoff at the cooperative edge $(1, p_1)$ if the ratio $b/c >(\sqrt{5} + 1)/2$ (the ‘golden ratio’) or converge to the edge if player Y takes a strategy from a subset of class A.  On the other hand, unbending strategies from class D are able to guarantee global convergence to the same cooperative edge, after which the baseline payoff $O$ of player Y can further decide the direction of change of $p_2$. Player X will increase the value of $p_2$ until reaching the full cooperator corner $(1, 1)$ if player Y uses an intermediate ZD strategy with $P < O < R$ or remain neutral along the edge $(1, p_2)$ if player Y uses the generous ZD strategy with $O = R$.

In a nutshell, we have found and characterized general classes of unbending strategies that are able to steer the learning dynamics of selfish players in head-to-head encounters. Unbending strategies can trigger the backfire of greedy exploitations, punish the extortion, and thus turn the interactions into an Ultimatum game: to ``win each battle" or to ``win the war". These strategies helm the reactive learning dynamics of extortionate and reactive players to fairness without introducing population dynamics and evolution through generations. 

Our work helps pave the way for the promising initiative of combining game theory with artificial intelligence to gain more analytical insights into computational learning theory. Of particular interest are extensions to multi-agent learning systems that are fraught with perception and implementation errors and beyond pairwise interactions in a changing environment~\cite{bloembergen2015evolutionary}. In doing so, integrating theoretical and empirical approaches will help enhance our understanding of cooperation in various advanced AI systems besides social and biological systems~\cite{dafoe2021cooperative}.




\bibliographystyle{unsrt}
\bibliography{ref}

\end{document}